\newtheorem*{definition*}{Definition}
\newtheorem{assumption}{Assumption}
\newtheorem{theorem}{Theorem}
\newtheorem{lemma}{Lemma}
\newtheorem{remark}{Remark}
\newcommand*{\de}{\mathrm{d}}
\newcommand*{\N}{\mathcal{N}}
\newcommand{\R}{\mathbb{R}}
\newcommand{\T}{\mathrm{\scriptscriptstyle T}}
\newcommand*{\E}{\mathrm{E}}
\newcommand{\var}{\mathrm{var}}
\newcommand{\cov}{\mathrm{cov}}
\newcommand{\pr}{\mathrm{pr}}
\newcommand{\nai}{\mathrm{nai}}
\newcommand{\ipw}{\mathrm{ipw}}
\newcommand{\aipw}{\mathrm{aipw}}
\newcommand{\psm}{\mathrm{psm}}
\newcommand{\xm}{\mathrm{x.m}}
\newcommand*{\sumin}{\sum_{i=1}^n}
\newcommand*{\intin}{\int_{0}^\tau}
\def\bSig\mathbf{\Sigma}
\newcommand{\indep}{\rotatebox[origin=c]{90}{$\models$}}
\title{Propensity score matching for estimating a marginal hazard ratio}
\author[1]{Tongrong Wang\footnote{Contributed equally.}}
\newcommand\CoAuthorMark{\footnotemark[\arabic{footnote}]}
\author[2]{Honghe Zhao\protect\CoAuthorMark}
\author[2]{Shu Yang\thanks{syang24@ncsu.edu}}
\author[3]{Shuhan Tang}
\author[1]{Zhanglin Cui}
\author[4]{Li Li}
\author[1]{Douglas E. Faries}
\affil[1]{Eli Lilly and Company, Indianapolis, Indiana, USA}
\affil[2]{Department of Statistics, North Carolina State University, Raleigh, North Carolina, USA}
\affil[3]{Department of Statistics, The Ohio State University, Columbus, Ohio, USA}
\affil[4]{Statistical Science, R\&G US Inc, Somerset, New Jersey, USA}
\begin{document}
\maketitle





\begin{abstract}
Propensity score matching is commonly used to draw causal inference from observational survival data. However, its asymptotic properties have yet to be established, and variance estimation is still open to debate. We derive the statistical properties of the propensity score matching estimator of the marginal causal hazard ratio based on matching with replacement and a fixed number of matches. We also propose a double-resampling technique for variance estimation that takes into account the uncertainty due to propensity score estimation prior to matching.
\end{abstract}

\noindent \textit{Keywords:} Causal survival analysis; Martingale; Propensity score matching; Variance estimation; Double resampling.

\section{Introduction}
Survival analysis plays an increasingly important role in treatment effect estimation due to the frequent occurrence of time-to-event outcomes in biomedical studies. By comparing the hazard functions of survival times between treated and untreated individuals, the marginal hazard ratio is commonly used to measure the effect of treatment on a time-to-event outcome for a particular population of interest. The log of marginal hazard ratio corresponds to the coefficient indexing a univariate Cox proportional hazard model, where the hazard of the outcome is regressed on an indicator denoting treatment status \citep{cox1972regression}.

In observational studies, propensity score (PS) methods are standard approaches for reducing the effect of confounding. However, depending on the specific type or implementation of the propensity score methods, the population parameters or treatment effects being estimated may not be the same. In general, a measure of treatment effect can be classified as conditional or marginal. Conditional effects correspond to an average effect at the individual level, whereas marginal effects denote an average effect at the population level. Hazard ratios are non-collapsible measures, which means that the marginal and conditional hazard ratios generally do not coincide even in the absence of confounding \citep{martinussen2013collapsibility}. Therefore, the non-collapsibility of the hazard ratios renders methods that estimate the conditional treatment effects biased for estimating the marginal hazard ratios \citep{austin2007conditioning,austin2013performance}. For instance, stratification on the propensity score and covariate adjustment using the propensity score estimate conditional treatment effects, so they generally yield biased estimates of the marginal hazard ratio \citep{williamson2012propensity, vansteelandt2014regression}. The inverse probability weighted (IPW) estimator fits a marginal Cox model with each observation weighted by the reciprocal of its estimated probability of receiving the observed treatment. This estimator is consistent for the marginal hazard ratio if the propensity score model is correctly specified and the overlap assumption is satisfied. However, it is sensitive to slight misspecification of the PS model and can yield large variance when the estimated PS is close to 0 \citep{kang2007demystifying,busso2014new,austin2017performance}. To protect against misspecification of the propensity score model, the doubly robust augmented inverse probability weighted (AIPW) estimator is proposed, which combines IPW with an outcome regression model \citep{tchetgen2012parametrization}. It is consistent as long as either the outcome model or the propensity score model is correctly specified. However, non-collapsibility of the hazard ratios makes specifying a correct outcome model increasingly difficult since it needs to marginalize to a survival curve that satisfies the marginal proportional hazard assumption. Literature also suggests the performance of AIPW in finite sample scenarios can be unstable when both models of the PS and the outcome are misspecified and are also sensitive to extreme values of the estimated PS \citep{kang2007demystifying,waernbaum2012model}.

Alternatively, matching methods enjoy multiple desirable features. First, matching is transparent and has great intuitive appeal as it seeks to emulate a completely randomized experiment using observational data \citep{dehejia2002propensity,rubin2006matched,stuart2010matching}. Second, empirical evidence suggests that while matching on the PS and IPW do not uniformly outperform one another in all situations, PS matching tends to be more robust to misspecification of the PS model and to extreme values of the estimated PS (practical violation of the overlap assumption) \citep{frolich2004finite,waernbaum2012model,busso2014new,austin2020variance,greifer2021matching}. Moreover, matching does not rely on an outcome model and thus avoids the aforementioned congeniality issue between the outcome model and the proportional hazard assumption when applying AIPW. Simulation results have shown that greedy nearest neighbor matching on the propensity score without replacement results in unbiased estimation of the marginal hazard ratio over the subpopulation of treated individuals \citep{austin2013performance}. However, when the amount of treated and control units are comparable, bias could arise due to incomplete matches. PS matching with replacement not only circumvents this issue but also permits estimation of the marginal treatment effect on the overall population containing both treated and untreated subjects \citep{austin2020variance,abadie2006large,williamson2012propensity}.

While PS matching with replacement is potentially attractive for estimating the marginal hazard ratio, no formal asymptotic results have been established. For variance estimation, most existing approaches do not take into account the uncertainty of parametrically estimating the propensity score prior to matching and restrict inference to the matched sample \citep{gayat2012propensity,austin2013performance,austin2020variance}. Moreover, although \citet{austin2014use} demonstrated that bootstrap is valid for matching without replacement, it is inappropriate for matching with replacement because the bootstrap sample fails to preserve the distribution for the number of times each individual is used as a match \citep{abadie2008failure}. Thus, it is important to develop a variance estimation procedure for the matching with replacement estimator of the marginal hazard ratio.

This article concerns propensity score matching with replacement and with a fixed number of matches for estimating the marginal hazard ratio for a point binary treatment given pre-treatment covariates. We will simply refer to this procedure as PSM or the PSM estimator. We note that PSM involves imputing the missing potential outcome processes, and is distinct from paired (one-to-one) matching without replacement, which could result in dropping units from the analysis \citep{abadie2006large}.

In this article, we first derive the large sample distribution of the PSM estimator based on known propensity scores. Secondly, because the propensity score function is often estimated prior to matching, we also derive the large sample distribution of PSM accounting for the uncertainty due to nuisance parameter estimation. Since PSM is a nonsmooth functional of the distribution of the data, our derivation is based on the technique developed by \citet{andreou2012alternative}, which offers a general recipe for deriving the limiting distribution of nonsmooth statistics that involve estimated nuisance parameters. This technique has been successfully applied by \citet{abadie2016matching} for estimating the average treatment effects for a continuous outcome. Our derivation extends their results to the survival context. This extension is not trivial because the survival outcome is often right-censored. We utilize the martingale theory of the counting process to establish asymptotic distributions of the PSM estimator of the marginal hazard ratio. In addition, we propose a replication-based variance estimator for PSM that accounts for the uncertainty of nuisance parameter estimation. For a continuous outcome, \citet{adusumilli2022bootstrap} proposed a bootstrap inference procedure, which improves upon the asymptotic variance estimator proposed by \citet{abadie2016matching}. Our proposed method extends Adusumilli's technique to the survival context. Simulation results suggest that it generally outperforms Wald-type inference based on the asymptotic theory in finite samples.

The rest of this paper proceeds as follows. Section \ref{sec:concept}
introduces the notation, model, and assumptions for identification. Section
\ref{sec:Methodology} presents the PSM estimator, including the matching procedure and estimating equations. In Section \ref{sec:Main-results}, we show the main asymptotic properties
of the PSM estimator. Section
\ref{sec:Resampling-variance-estimation} proposes a resampling-based
inference procedure. Section \ref{sec:Simulation-Study} uses simulation
to evaluate the finite-sample properties of the estimators. Section
\ref{sec:An-application} applies the new estimator to the IMS Health
Oncology electronic medical record data to evaluate the effects of
two treatments on treating non-small cell lung cancer. Section \ref{sec:Discussions}
concludes with potential extensions. The supporting information contains
the technical details, proofs and additional simulation results.

\section{Notation, Model, and Assumptions}\label{sec:concept}

\subsection{Potential outcomes and the causal PH model}

We use the potential outcomes framework under the Stable Unit Treatment
Value Assumption, and let $T^{(\omega)}$ and $C^{(\omega)}$ be the
potential values of the survival outcome and censoring indicator had
the individual received treatment $\omega$ ($\omega=0,1$). We assume
non-informative censoring under which $T^{(\omega)}\indep C^{(\omega)}$,
where $\indep$ means "independent of". This assumption is reasonable
if the censored times occur at the end of study follow-up $\tau$,
the so-called administrative censoring. 

We define
$
\lambda_{\omega}(t)=\lim_{\delta_t\rightarrow0}{\delta_t}^{-1}P\left(t\leq T^{(\omega)}<t+\delta_t\mid T^{(\omega)}\geq t\right)
$
as the causal hazard rate of failing at time $t$ for a population
of patients had they received treatment $\omega$. 
Adopting notation used by \citet{cox1972regression} in the potential outcomes framework,
we define $U^{(\omega)}=\min(T^{(\omega)},C^{(\omega)})$ as the time
to a clinical event or censoring, $\Delta^{(\omega)}=I(T^{(\omega)}\leq C^{(\omega)})$
the clinical event indicator, $N^{(\omega)}(t)=I(U^{(\omega)}\leq t,\Delta^{(\omega)}=1)$
the counting process of observed event and $Y^{(\omega)}(t)=I(U^{(\omega)}\geq t)$
the at-risk process \textcolor{black}{for a population of patients
had they received treatment $\omega$.} 
Following \citet{tchetgen2012parametrization}, we assume a causal PH
model.

\begin{definition*}[Causal PH Model]The structural
model for comparing treatment $\omega$ and treatment $0$ is\textcolor{black}{{}
\begin{equation}
\lambda_{\omega}(t)=\lambda_{0}(t)\exp\left(\beta_{0}\omega\right),\label{(5.1)}
\end{equation}
}where $\lambda_{0}(t)$ is the population hazard rate if all individuals
had treatment $\omega=0$. \end{definition*}

The parameter $\beta_0$ describes log of the relative
hazard of having a clinical event if all individuals received treatment
$\omega=1$ compared to if all individuals received treatment $\omega=0$.
Although focusing on the marginal hazard ratio is controversial in
the literature, it is still a
useful estimand that summarizes the overall average of the hazard
ratios over a certain time period \citep{hernan2000marginal}. This explains its wide acceptance
among clinical trial statisticians and regularity agencies such as
the U.S. Food and Drug Administration.

\subsection{Observed data and identification assumptions}

Let $W_{i}$ be the binary treatment, and let $T_{i}$ and $C_{i}$ be the times to a clinical event and censoring, respectively, for individual $i=1,\ldots,n$.
We define $U_{i}=\min(T_{i},C_{i})$ as the time to a clinical event
or censoring, $\Delta_{i}=I(T_{i}\leq C_{i})$ the clinical event
indicator, $N_{i}(t)=I(U_{i}\leq t,\Delta_{i}=1)$ the observed data
counting process, and $Y_{i}(t)=I(U_{i}\geq t)$ the at-risk process.
Suppose we observe a set of pre-treatment baseline covariates $X_{i}\in\R^{d}$. Let $e(X_{i})=P(W_{i}=1\mid X_{i})$ be the propensity score. 
The observed data are summarized
as $\{O_{i}=(X_{i},W_{i},U_{i},\Delta_{i}):i=1,\ldots,n\}$. We assume
that $\{O_{i}:i=1,\ldots,n\}$ are independent and identically distributed. 

We make the consistency assumption that links the observed data processes
with the potential outcome processes. In order to use the observed data to estimate
the parameters in Model (\ref{(5.1)}), we require the assumptions
of unconfoundedness and positivity \citep{robins2004optimal}.

\begin{assumption}[Consistency]\label{asump:consistency} $T_{i}=T_{i}^{(W_{i})}$
and $C_{i}=C_{i}^{(W_{i})}$, or equivalently \textcolor{black}{$N_{i}(t)=N_{i}^{(W_{i})}(t)$
and $Y_{i}(t)=Y_{i}^{(W_{i})}(t)$ for all $t$.}
\end{assumption}
\begin{assumption}[Unconfoundedness]\label{asp:NUC}$W_{i}\indep\{T_{i}^{(0)},T_{i}^{(1)}\}\mid X_{i}$.
\end{assumption}
\begin{assumption}[Positivity]\label{asp:positivity}
With probability $1$, $0<\underline{c}<e(X_{i})<\bar{c}<1$.
\end{assumption}

\section{The PSM estimator of the marginal hazard ratio} \label{sec:Methodology}

Although it is evident that paired matching (propensity score matching without replacement) is unbiased for the marginal hazard ratio over the treated population, this approach does not target estimation of $\beta_0$, the marginal hazard ratio over the population containing both treated and untreated individuals. Unlike paired matching, the PSM (with replacement) estimator permits estimation of $\beta_0$ and will be the focus of this article \citep{abadie2006large}. Previous simulation studies have provided evidence for the unbiasedness of PSM for estimating $\beta_0$ \citep{austin2020variance}. In this section, we describe the PSM estimator to set the stage for our main results.

For the units in the sample, only one of the potential counting processes, $N_{i}^{(0)}(t)$ and $N_{i}^{(1)}(t)$, is observed and the other is unobserved or missing. The same is true for the potential at-risk processes. The construction of the PSM estimator first involves imputing all the missing potential outcome processes. To do so, for each unit $i$, it finds the first $M$ closest units in the opposite treatment based on the Euclidean distance between the propensity scores. Note that $M$ is fixed and does not vary between units. Define $\mathcal{J}_{M,i}$ as the set of indices for the first $M$ matches for unit $i$. For all $i$ and $\omega$, the PSM estimator imputes the missing counting processes and at-risk processes respectively as
\begin{align}
\overline N_{i}^{*(\omega)}(t)  & =\begin{cases}
N_{i}(t)  & \text{if }W_{i}=\omega,\\
 {M}^{-1}\sum_{j\in\mathcal{J}_{M,i}}N_{j}(t)   & \text{if }W_{i}=1-\omega,
\end{cases}\label{eq:imputation-1}
\end{align}
and
\begin{align}
\overline Y_{i}^{*(\omega)}(t)  & =\begin{cases}
Y_{i}(t)  & \text{if }W_{i}=\omega,\\
 {M}^{-1}\sum_{j\in\mathcal{J}_{M,i}}Y_{j}(t)   & \text{if }W_{i}=1-\omega.
\end{cases}\label{eq:imputation-2}
\end{align}

Let $k_{i}=\sum_{l=1}^n I\left(i \in \mathcal{J}_{M, l}\right)$ denote the number of times unit $i$ is used as a match. Note that each unit can be used as a match more than once so $k_{i}$ can be larger than 1. In practice, the true propensity score is unknown. Following most of the empirical literature,
we assume that the propensity score follows a generalized linear model,
$e(X_{i}^{\T}\theta_0)$ \citep{imbens2015causal,li2016balancing,austin2017performance}. The matching procedure
can be carried out with the estimated propensity score $e(X_{i}^{\T}\widehat{\theta})$, where  $\widehat{\theta}$ is a consistent estimator of $\theta_0$. We now denote $k_{i}$ to be $k_{\widehat{\theta},i}$ to reflect its dependence on  $\widehat{\theta}$.

Once the missing potential outcome processes are imputed, PSM then fits a marginal structural model to the imputed dataset. Define $\Lambda_{0}(t)=\int_{0}^{t}\lambda_{0}(v)\de v$ as the cumulative
hazard function for $\omega=0$ at time $t$. The
estimating functions for $\Lambda_{0}(t)$, $t\geq0$ and $\beta_{0}$ are
\begin{eqnarray}
\sum_{i=1}^{n}\sum_{\omega=0}^{1}\left\{ \de\overline{N}_{i}^{*(\omega)}(t)-\de\Lambda_{0}(t)\exp(\beta\omega)\overline{Y}_{i}^{*(\omega)}(t)\right\} ,\label{eq:(13.1)}\\
\sum_{i=1}^{n}\sum_{\omega=0}^{1}\int_{0}^{\tau}\omega\left\{ \de\overline{N}_{i}^{*(\omega)}(t)-\de\Lambda_{0}(t)\exp(\beta\omega)\overline{Y}_{i}^{*(\omega)}(t)\right\} ,\label{eq:(13.2)}
\end{eqnarray}
where $\overline{N}_{i}^{*(\omega)}(t)$ and $\overline{Y}_{i}^{*(\omega)}(t)$
are  defined in (\ref{eq:imputation-1}) and (\ref{eq:imputation-2}).

We can equivalently write (\ref{eq:(13.1)})
and (\ref{eq:(13.2)}) using observed outcome processes respectively as 
\begin{eqnarray}
\sum_{i=1}^{n}\left\{ 1+\frac{k_{\widehat{\theta},i}}{M}\right\} \left\{ \de N_{i}(t)-\de\Lambda_{0}(t)\exp(\beta W_{i})Y_{i}(t)\right\} ,\label{eq:(16.1)}\\
\sum_{i=1}^{n}\left\{ 1+\frac{k_{\widehat{\theta},i}}{M}\right\} \int_{0}^{\tau}W_{i}\left\{ \de N_{i}(t)-\de\Lambda_{0}(t)\exp(\beta)Y_{i}(t)\right\} .\label{eq:(16.2)}
\end{eqnarray}
Setting (\ref{eq:(16.1)}) equal to zero, we can obtain the estimator
for $\de\Lambda_{0}(t)$ for fixed $\beta$ as 
\begin{equation}
\de\widehat{\Lambda}_{0}(t)=\frac{\sum_{i=1}^{n}\{1+k_{\widehat{\theta},i}/M\}\de N_{i}(t)}{\sum_{i=1}^{n}\{1+k_{\widehat{\theta},i}(W_{i})/M\}\exp(\beta W_{i})Y_{i}(t)}.\label{(17.1)}
\end{equation}
Substituting (\ref{(17.1)}) into (\ref{eq:(16.2)}), we obtain the
estimating equation to solve for $\beta$, 
\begin{equation}
G_{n}(\beta)=\sum_{i=1}^{n}\int_{0}^{\tau}\left\{ 1+\frac{k_{\widehat{\theta},i}}{M}\right\} \left\{ W_{i}-\widehat{Q}(\beta,t)\right\} \de N_{i}(t)=0,\label{eq:partialscore}
\end{equation}
where 
\begin{equation}
\widehat{Q}(\beta,t)=\frac{\sum_{j=1}^{n}\{1+k_{\widehat{\theta},i}/M\}W_{j}\exp(\beta W_{j})Y_{j}(t)}{\sum_{j=1}^{n}\{1+k_{\widehat{\theta},i}/M\}\exp(\beta W_{j})Y_{j}(t)}.\label{eq:(2.1)}
\end{equation}
Equation
(\ref{eq:partialscore}) is the partial score equation for
a Cox PH model with $W_{i}$ as the covariate and weights
$1+k_{\widehat{\theta},i}/M$. Thus, the PSM estimator $\widehat{\beta}$
can be calculated by standard software.

We conclude this section by summarizing the steps for calculating the PSM estimator for $\beta_{0}$ as follows. 
\begin{description}
\item [{{Step$\ $1.}}] 
Fit a propensity score model $e(X_{i}^{\T}\theta)$,
and obtain an estimate $\widehat{\theta}$. 
\item [{{Step$\ $2.}}] Based on the estimated propensity scores $e(X_{i}^{\T}\widehat{\theta})$, carry out the matching procedure described above. Record the number of times each unit is used as a match $k_{\widehat{\theta},i}$. 
\item [{{Step$\ $3.}}] Obtain the PSM estimator $\widehat{\beta}$ for $\beta_{0}$ by solving
(\ref{eq:partialscore}) using standard software, i.e. by fitting a Cox PH model to the observed data with covariate $W_i$ and weights $1+k_{\widehat{\theta},i}/M$. 
\end{description}


\section{Main Results}\label{sec:Main-results}
\subsection{Asymptotic results with known propensity scores}\label{subsec:Asymptotic-results-known}

In this section, we establish the asymptotic normality of the PSM estimator of the marginal HR assuming the propensity scores are known. The results in this subsection are applicable to rare situations when the propensity scores are known, and are useful for understanding the effect of estimating the propensity scores on the PSM estimator when compared to the results in the next subsection.

Under regularity conditions
in Assumption S1, we show that there exists a neighborhood $\mathcal{B}$ of $\beta_{0}$
and a function $Q(\beta_{0},t)$ such that for all $(\beta,t)\in\mathcal{B}\times[0,\tau]$,
$\widehat{Q}(\beta,t)\rightarrow_{p}Q(\beta,t),\text{ as }n\rightarrow\infty.
$ 
We then define
\begin{equation}
H_{i}(\omega)=\int_{0}^{\tau}\left\{ \omega-Q(\beta_0,t)\right\} \left\{ \de N_{i}^{(\omega)}(t)-\de\Lambda_{0}(t)\exp(\beta_{0}\omega)Y_{i}^{(\omega)}(t)\right\} ,\label{eq:def H(w)}
\end{equation}
$\mu_{H}(\omega,X)=E\{H(\omega)\mid X\},$ and $\sigma_{H}^{2}(\omega,X)=\var\{H(\omega)\mid X\}.$ 

Because $\widehat{\beta}$ is the solution to the estimating equation $G_{n}(\beta)=0$ in
(\ref{eq:partialscore}), the key step is to characterize the asymptotic
properties of  $G_{n}(\beta_{0})$. With a known $\theta_{0}$, 
we can show that
\begin{equation}
n^{-1/2}G_{n}(\beta_{0})=n^{-1/2}\sum_{i=1}^{n}\left\{ 1+\frac{k_{\theta_{0},i}}{M}\right\} H_{i}(W_{i})+o_{p}(1).\label{eq:linear S}
\end{equation}
Based on (\ref{eq:linear S}) and the M estimation theory,  
we derive the asymptotic results for $\widehat{\beta}$ as follows.

\begin{theorem} \label{Thm1} 
Under Assumptions \ref{asump:consistency}--\ref{asp:positivity}
and the regularity conditions in Assumption S1 presented in the supplementary
material, with the known propensity score, 
\[
n^{1/2}(\widehat{\beta}-\beta_{0})\rightarrow\N(0,V_{1}),
\]
as $n\rightarrow\infty$, where $V_{1}=\{A(\beta_{0})\}^{-1}V_{G}\{A(\beta_{0})\}^{-1}$,
\begin{align}
A(\beta_{0}) & =E\Bigg(\int_{0}^{\tau}\left[\frac{E\left\{ \exp(\beta_{0})Y^{(1)}(t)\right\} }{E\left\{ Y^{(0)}(t)\right\} +E\left\{ \exp(\beta_{0})Y^{(1)}(t)\right\} }-1\right]\nonumber \\
 & \qquad\qquad\qquad\quad\times\frac{E\left\{ \exp(\beta_{0})Y^{(1)}(t)\right\} }{E\left\{ Y^{(0)}(t)\right\} +E\left\{ \exp(\beta_{0})Y^{(1)}(t)\right\} }\sum_{\omega=0}^{1}\de N^{(\omega)}(t)\Bigg).
\end{align}
and,
\begin{align}
V_{G}=\sum_{\omega=0}^{1}E\left[\sigma_{H}^{2}\{\omega,e(X)\}\left\{ \frac{2M+1}{2Mp(\omega\mid X)}-\frac{p(\omega\mid X)}{2M}\right\} \right]+E\left(\left[\mu_{H}\{0,e(X)\}+\mu_{H}\{1,e(X)\}\right]^{2}\right),\label{eq:VS}
\end{align}
$p(\omega\mid X)=\pr(W=\omega\mid X)$.
\end{theorem}


    Recall that the estimating
    function of the PSM estimator is $n^{-1/2}G_{n}(\beta_{0})$ as in (\ref{eq:linear S}), which
    targets estimating $E\{\sum_{\omega=0}^{1}H_{i}(\omega)\}$. Theorem \ref{Thm1} shows that this estimating function has
    the asymptotic variance $V_{G}$ as in (\ref{eq:VS}). From the standard
    semiparametric estimation literature,
    the semiparametric efficiency bound for the target parameter $E\{\sum_{\omega=0}^{1}H_{i}(\omega)\}$
    is $\sum_{\omega=0}^{1}E\{\sigma_{H}^{2}(\omega,X)/p(\omega\mid X)\}+E[\{\mu_{H}(0,X)+\mu_{H}(1,X)\}^{2}]$ \citep{bang2005doubly}.
    Thus, the PSM estimator does not attain the semiparametric efficiency bound.
    The asymptotic variance $V_{G}$ in (\ref{eq:VS}) becomes closer to the efficiency
    bound as the number of matches $M$ gets larger and $e(X)$ can explain all the variation
    of $H_{i}(\omega)$ given $X$, i.e., $\mu_{H}(\omega,X)=\mu_{H}\{\omega,e(X)\}$.

\subsection{Asymptotic results with estimated propensity scores}\label{subsec:Asymptotic-results-estimated}

We now study the asymptotic properties of the PSM estimator with the estimated propensity score. The
technique we will use is based on \citep{andreou2012alternative}.
The main idea is to apply
Le Cam's third lemma 
to locally asymptotically normal models \citep{le1990asymptotics}. Let $P^{\theta}$ be the distribution of $\left\{\left(A_i, X_i, U_i, \Delta_i \right): i=1, \ldots, n\right\}$ indexed by the parameter $\theta$ from the propensity score model. Let $\theta_{n}$ be contiguous to $\theta_{0}$, and $P^{\theta_{n}}$ be the distribution indexed by the local parameter $\theta_{n}$.
Under $P^{\theta_{n}}$, denote the true parameter value as $\beta_{0}(\theta_{n})$,
the PSM estimator based on the true parameter
$\theta_{n}$ as $\widehat{\beta}(\theta_{n})$, and the log likelihood
of $P^{\theta_{0}}$ with respect to $P^{\theta_{n}}$ as $\Lambda_{n}(\theta_{0}\mid\theta_{n})$.
Assume that {\it under $P^{\theta_{n}}$},
\begin{equation}
\left(n^{1/2}\{\widehat{\beta}(\theta_{n})-\beta_{0}(\theta_{n})\},n^{1/2}(\widehat{\theta}-\theta_{n}),\Lambda_{n}(\theta_{0}\mid\theta_{n})\right)^{\T}\label{eq:LAN}
\end{equation}
has a limiting normal distribution.  Le Cam\textquoteright s third
lemma states that {\it under $P^{\theta_{0}}$}, $n^{1/2}\{\widehat{\beta}(\theta_{n})-\beta_{0}(\theta_{n})\}$
has a limiting normal distribution. Then heuristically by replacing $\theta_{n}$
with $\widehat{\theta}$, one can then approximate the asymptotic distribution
of $n^{1/2}(\widehat{\beta}-\beta_{0})$ as shown in Theorem \ref{Thm2}. 

\begin{theorem}\label{Thm2}

Under Assumptions \ref{asump:consistency}--\ref{asp:positivity}
and the regularity conditions in Assumption S1 presented in the supplementary
material, with a correctly specified propensity score model, 
\[
n^{1/2}(\widehat{\beta}-\beta_{0})\rightarrow\N(0,V_{2}),
\]
as $n\rightarrow\infty$, $V_{2}=\{A(\beta_{0})\}^{-1}\widetilde{V}_{G}\{A(\beta_{0})\}^{-1}$,
$\widetilde{V}_{G}=V_{G}-c^{\T}\mathcal{I}_{\theta_{0}}^{-1}c,$ where
$V_{G}$ is defined in Theorem \ref{Thm1}, $\mathcal{I}_{\theta_{0}}$
is the Fisher information of $\theta_{0}$,  $\dot{e}(X^{\T}\theta_{0})=\partial e(X^{\T}\theta_{0})/\partial\theta$, and
\begin{equation}
c=E\left\{ \left[\frac{\cov\left\{ X,\mu_{H}(1,X)\mid e(X)\right\} }{e(X)}+\frac{\cov\left\{ X,\mu_{H}(0,X)\mid e(X)\right\} }{1-e(X)}\right]\dot{e}(X^{\T}\theta_{0})\right\} .\label{eq:c}
\end{equation}

\end{theorem}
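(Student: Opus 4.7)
\textbf{Proof plan for Theorem \ref{Thm2}.} The strategy is to build on the linearization \eqref{eq:linear S} from the proof of Theorem \ref{Thm1} and add a correction term that captures the effect of replacing the true $\theta_{0}$ with the MLE $\hat\theta$ inside the matching weights $1+k_{\hat\theta,i}(W_{i})$. A conventional M-estimation Taylor expansion in $\theta$ is unavailable because the matching indices are discontinuous functions of $\theta$. Instead I would adopt the framework of \citet{andreou2012alternative}, as adapted by \citet{abadie2016matching} to matching estimators, which separately handles the smooth behavior of the population version of the estimating function and the stochastic equicontinuity of its centered counterpart.

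The main technical step is to establish the asymptotic linearization
\[
n^{-1/2}\{S_{n}(\beta_{0};\hat\theta)-S_{n}(\beta_{0};\theta_{0})\} = -c^{\T}\, n^{1/2}(\hat\theta-\theta_{0}) + o_{p}(1),
\]
with $c$ as in \eqref{eq:c}. The intuition is that matching on $e(X)$ already balances any feature of $X$ measurable with respect to $e(X)$, so only the residual variation of $\mu_{H}(\omega,X)$ conditional on $e(X)$, taken in the direction of $\dot e(X^{\T}\theta_{0})$, enters the first-order correction; this is the origin of the conditional-covariance form of $c$. To adapt the continuous-outcome argument of Abadie and Imbens to the survival setting, I would replace the increments $\de N_{i}^{(\omega)}(t)$ by the martingale decomposition $\de\Lambda_{0}(t)\exp(\beta_{0}\omega)Y_{i}^{(\omega)}(t)\,\de t+\de M_{i}^{(\omega)}(t)$, control the resulting integrals uniformly over $t\in[0,\tau]$ with martingale moment bounds, and use Assumption \ref{asp:positivity} to envelope the matching weights.

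Given the linearization and the standard influence-function expansion of the logistic MLE $\hat\theta$, both $n^{-1/2}S_{n}(\beta_{0};\theta_{0})$ and the correction $-c^{\T} n^{1/2}(\hat\theta-\theta_{0})$ become asymptotically linear sums of i.i.d.\ contributions. A martingale CLT gives their joint normal limit. Evaluating the cross-covariance by iterated expectation, conditioning first on $X_{i}$ and then on $e(X_{i})$, shows, after the identities defining $c$, that the total asymptotic variance collapses to $\widetilde V_{S}=V_{S}-c^{\T}\mathcal I_{\theta_{0}}c$. Substituting this expansion of $n^{-1/2}S_{n}(\beta_{0};\hat\theta)$ into the Taylor expansion of the estimating equation $S_{n}(\hat\beta;\hat\theta)=0$ around $\beta_{0}$, with derivative converging to $-A(\beta_{0})$ exactly as in Theorem \ref{Thm1}---since $A(\beta_{0})$ depends only on the limiting at-risk processes, which are unaffected by propensity-score estimation---yields $n^{1/2}(\hat\beta-\beta_{0})\to\N(0,A(\beta_{0})^{-1}\widetilde V_{S}A(\beta_{0})^{-1})=\N(0,V_{2})$.

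The principal obstacle is the linearization in the second paragraph. Its derivation requires control of an empirical process indexed by $\theta$ whose sample paths are discontinuous everywhere, a setting outside classical stochastic equicontinuity theory. The Andreou-Werker device reduces the problem to (i) showing smoothness of the deterministic map $\theta\mapsto n^{-1}E\{S_{n}(\beta_{0};\theta)\}$ near $\theta_{0}$ with derivative $-c$, where the explicit form of $c$ arises from differentiating the matching-density limit using standard properties of nearest-neighbor functionals, and (ii) stochastic equicontinuity of the centered process on $O(n^{-1/2})$ neighborhoods of $\theta_{0}$. Extending (ii) from the continuous-outcome case of \citet{abadie2016matching} to censored survival outcomes is where the novel effort concentrates: the time integral $\int_{0}^{\tau}$ must be interchanged with the matching perturbation, which I would justify by martingale moment inequalities combined with the bounded variation of $N_{i}(t)$ and $Y_{i}(t)$ on $[0,\tau]$ together with the uniform envelope provided by positivity.
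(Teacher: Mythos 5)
You have the right destination ($\widetilde V_S = V_S - c^{\T}\mathcal{I}_{\theta_0}c$ with $c$ as in \eqref{eq:c}), the right references, and the right intuition for why only the within-stratum covariance $\cov\{X,\mu_H(\omega,X)\mid e(X)\}$ survives. But the step you single out as the ``main technical step'' --- the linearization $n^{-1/2}\{S_n(\beta_0;\hat\theta)-S_n(\beta_0;\theta_0)\}=-c^{\T} n^{1/2}(\hat\theta-\theta_0)+o_p(1)$, to be proved by combining a derivative of the population map with stochastic equicontinuity over $O(n^{-1/2})$ neighborhoods of $\theta_0$ --- is precisely the step that neither the paper nor \citet{abadie2016matching} proves, and your description of the Andreou--Werker device as ``reducing'' the problem to that pair of conditions inverts its purpose. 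Matching with replacement and a fixed number of matches yields an estimating function that is piecewise constant in $\theta$ with a huge number of discontinuities, and no stochastic-equicontinuity argument of the classical (Pakes--Pollard, Chen--Linton--van Keilegom) type is available; that is exactly why \citet{andreou2012alternative} and \citet{abadie2016matching} replace the linearization with a contiguity argument. As written, your plan rests its hardest step on a tool you yourself note is ``outside classical stochastic equicontinuity theory,'' without supplying the substitute.

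The paper's actual route avoids the linearization altogether. It introduces local alternatives $\theta_n=\theta_0+n^{-1/2}h$, works under $P^{\theta_n}$ where $e(X^{\T}\theta_n)$ is the true propensity score (so $D_n(\theta_n)$ can be analyzed exactly as in Theorem \ref{Thm1}), and establishes the joint asymptotic normality of $\bigl(D_n(\theta_n),\,n^{1/2}(\hat\theta_n-\theta_n),\,\Lambda_n(\theta_0\mid\theta_n)\bigr)$ in \eqref{eq:lecam} by writing an arbitrary linear combination $z_1 D_n(\theta_n)+z_2^{\T}\Delta_n(\theta_n)$ as a martingale array with three blocks of $\sigma$-fields --- first revealing $(W_i, X_i^{\T}\theta_n)$, then the full $X_i$, then $H_i(W_i)$ --- whose conditional variances $\sigma_{L,1}^2,\sigma_{L,2}^2,\sigma_{L,3}^2$ generate both $V_S$ and the cross term $c$; Le Cam's third lemma then delivers $n^{-1/2}S_n(\beta_0)\rightarrow\N(0,V_S-c^{\T}\mathcal{I}_{\theta_0}^{-1}c)$ directly. (Note also the discrepancy you inherited from the theorem statement: the supplement carries $\mathcal{I}_{\theta_0}^{-1}$ in the correction term, which is what your own covariance arithmetic, $V_S+c^{\T}\mathcal{I}_{\theta_0}^{-1}c-2c^{\T}\mathcal{I}_{\theta_0}^{-1}c$, actually produces.) Your survival-specific concerns --- interchanging $\int_0^{\tau}$ with the matching perturbation, martingale moment bounds, positivity as an envelope for the weights --- are the right ones, but they must be deployed inside the second block of that martingale array rather than inside a stochastic-equicontinuity proof.
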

By comparing $V_2$ and $V_1$, the change in asymptotic variance after adjusting for estimating propensity score function is $-\{A(\beta_{0})\}^{-1}c^{\T}\mathcal{I}_{\theta_{0}}^{-1}c\{A(\beta_{0})\}^{-1}$, which can either be negative or zero. This reduction is a result of utilizing the available treatment assignment information in the data, which improves the efficiency of the propensity score matching estimator.  Therefore, Theorem 2 shows that matching based on the estimated propensity score generally improves the efficiency of the matching estimator compared to matching based on the true propensity score even if it is known. This phenomenon
is in line with that in the setting with a continuous outcome \citep{abadie2016matching}.
Theorem 2 motivates a variance estimator based on an approximation of the asymptotic variance formula (see supplementary information).

\section{Resampling-based inference}\label{sec:Resampling-variance-estimation}

We propose resampling variance estimation that
has a better finite-sample performance. 
\citet{abadie2008failure} demonstrated that the nonparametric bootstrap
is invalid for the matching estimator based on matching with replacement
and with a fixed number of matches. \citet{otsu2017bootstrap} suggested
resampling the matching estimator directly based on its martingale
residual terms, which only works for matching based on the full vector
of covariates. In order to reflect the uncertainty in the estimation
of the propensity score, \citet{adusumilli2022bootstrap} proposed
a hybrid bootstrap procedure by re-assigning new values for the treatments and
resampling the martingale residuals under both treatment conditions.
This necessitates imputation of the unobserved martingale residuals
under the opposite treatment.
Extending this procedure to survival outcomes, 
we define the martingale residuals as 
\begin{eqnarray*}
\widehat{r}_{1i}(\theta) & = & \widehat{\mu}_{H}\left\{ 0,e(X^{\T}\theta)\right\} +\widehat{\mu}_{H}\left\{ 1,e(X^{\T}\theta)\right\} -0,\\
\widehat{r}_{2i}(\omega,\theta) & = & H_{i}(\omega)-\widehat{\mu}_{H}\left\{ \omega,e(X^{\T}\theta)\right\} ,%
\end{eqnarray*}
where $\widehat{\mu}_{H}(\omega,e(X_{i}^{\T}\theta))$ is obtained by a nonparametric
regression estimation of $H_{i}$ on $e(X_{i}^{\T}\theta)$
among individuals with $W_{i}=\omega$. 
For individual $i$, define the secondary nearest
neighbor matching function as $m(\omega,X_{i})$; if $W_{i}=\omega,$
$m(\omega,X_{i})=i$, otherwise, $m(\omega,X_{i})$ returns the index
of the nearest neighbor in the opposite treatment group, where nearest neighbor is determined based on the full $X$ rather than on the propensity score. 
The pair of potential residuals for individual $i$ are defined as
\begin{equation}
\widehat{r}_{2i}^{*}(0,\theta)=\begin{cases}
\widehat{r}_{2i}(0,\theta) & \text{if }W_{i}=0,\\
\widehat{r}_{2m(0,X_{i})}(0,\theta) & \text{if }W_{i}=1,
\end{cases}\ \ \widehat{r}_{2i}^{*}(1,\theta)=\begin{cases}
\widehat{r}_{2m(0,X_{i})}(1,\theta) & \text{if }W_{i}=0,\\
\widehat{r}_{2i}(1,\theta) & \text{if }W_{i}=1.
\end{cases}\label{eq:imputation2}
\end{equation}
We propose a double-resampling
procedure as follows. 
\begin{description}
\item [{{Step$\ $0.}}] 
Complete Steps 1-3 for obtaining the PSM estimator. For each individual $i$, compute the secondary nearest neighbor matching function $m(\omega,X_{i})$ for $\omega=0,1$. Estimate the matching function $k_{\theta_{0},i}(\omega)$
for $\omega=0,1$. For $\omega=W_{i}$, let $\widehat{k}_{i}(\omega)=k_{\widehat{\theta},i}$;
for $\omega=1-W_{i}$, we use the following imputation strategy: create
$q_N$ quantile partitions based on $e(X_{i}^{\T}\widehat{\theta})$, identify the
quantile partition individual $i$ falls, randomly sample one, say $l$, from
that partition with $W_{j}=\omega$, and let $\widehat{k}_{i}(\omega)=k_{\widehat{\theta},l}$. 
\item [{{Step$\ $1.}}] Generate the bootstrap treatment, $W_{i}^{*},i=1,\ldots,n$,
from $0$ with probability $1-e(X_{i}^{\T}\widehat{\theta})$, and $1$ with probability $e(X_{i}^{\T}\widehat{\theta})$.
\item [{{Step$\ $2.}}] Based on $(W_{i}^{*},X_{i})_{i=1}^{n}$, re-fit
the propensity score model and obtain the replicate $\widehat{\theta}^{*}$. 
\item [{{Step$\ $3.}}] Generate a sequence of independent and identically
distributed random variables $(u_{1}^{*},\ldots,u_{n}^{*})$ from
$u_{i}^{*}\sim\N(0,1)$. 
\item [{{Step$\ $4.}}] Define the new bootstrap residuals as 
$
\widehat{r}_{H,i}^{*}(\theta)=\widehat{r}_{1i}(\theta)+\{1+k_{\widehat{\theta},i}(W_{i}^{*})/M\}\widehat{r}_{2i}(W_{i}^{*},\theta),
$
whose expectation over the probability distribution implied by Step
1, conditional on the original data, is 
$\widehat{R}_{H}^{*}(\theta)={n}^{-1}\sum_{j=1}^{n}[\widehat{r}_{1j}(\theta)  +e(X^{\T}\theta)\{1+k_{\widehat{\theta},i}(1)/M\}\widehat{r}_{2j}(1,\theta)
  +\{1-e(X^{\T}\theta)\}\{1+k_{\widehat{\theta},i}(0)/M\}\widehat{r}_{2j}(0,\theta)]$, 
where $k_{\widehat{\theta},i}(\omega)$ is imputed at Step 0. Re-center
the bootstrap residuals and compute the replicate of $G_{n}(\beta_{0})$
as 
 $G_{n}^{*}=\sum_{i=1}^{n}\left\{ \widehat{r}_{H,i}^{*}(\widehat{\theta}^{*})-\widehat{R}_{H}^{*}(\widehat{\theta}^{*})\right\} u_{i}^{*}.$
 
\item [{{Step$\ $5.}}] Repeat Steps 1--4 for a large number $B$ times
and denote the $b$th replicate of $G_{n}(\beta_{0})$ as $G_{n}^{*(b)}$.
Construct the $100(1-\alpha)\%$ percentile bootstrap confidence interval
for $G_{n}(\beta_{0})$ as $\left(G_{n(\alpha/2)}^{*},G_{n(1-\alpha/2)}^{*}\right)$,
where $G_{nq}^{*}$ is the $q$th percentile of $\{G_{n}^{*(1)},\ldots,G_{n}^{*(B)}\}$. 
\end{description}

We construct the $100(1-\alpha)\%$ confidence interval for $\beta_{0}$
as $\big\{\de G_{n}(\widehat{\beta})/\de\beta\big\}^{-1}(G_{n(\alpha/2)}^{*},G_{n(1-\alpha/2)}^{*})$,
which has the nominal coverage level asymptotically. The double-resampling
procedure is parallel to the weighted bootstrap procedure of \citet{adusumilli2022bootstrap}. 
Thus, the validity of the double-resampling procedure is a straightforward
adaptation of the proofs in the work of \citet{adusumilli2022bootstrap}
to our setting under regularity conditions given in Assumption S2. 
The interested reader is encouraged to consult
the original article by \citet{adusumilli2022bootstrap} for 
details. 

\begin{remark}We provide some discussions to Step 0. Note that
the secondary matching procedure matches on the full set of covariates
rather than on the propensity scores. Doing so preserves the conditional
covariance in (\ref{eq:c}) between $X$ and the error terms $\widehat{r}_{2i}$,
given the propensity scores. This term reflects the improvement when
using the estimated propensity score. 
 See Adusumilli.\cite{adusumilli2022bootstrap} We impute $k_{i}(\omega)$
by drawing a value from the empirical distribution of $k_{i}(W_{i})$
in the propensity score strata for $W_{i}=\omega$ to re-create the
distribution of $k_{i}(W_{i})$. The number of the propensity score
strata $q_N$ is required to go to infinity as the sample size increases.
In finite samples, we suggest using quintiles ($q_N=5$) and recommend conducting
sensitive analysis with different choices of $q_N$. Our simulation study shows
that the performance of the proposed double-sampling procedure is
not sensitive to this choice.


\end{remark}

\section{Simulation study}\label{sec:Simulation-Study}

\subsection{Overview}

We conduct a simulation study to compare the performance of PSM estimator to existing approaches for estimating the log-marginal hazard ratio and to assess the performance of the proposed resampling-based variance estimator. Motivated by previous findings, we consider data generating mechanisms that simulate a varying level of covariate overlap under the marginal proportional hazard assumption. This simulation design attempts to highlight the differences between the PSM estimator and other existing approaches, by investigating the following hypotheses:

\begin{itemize}
\item[{(1)}] When overlap is poor or when the propensity score model is misspecified, the PSM estimator is anticipated to outperform the IPW and AIPW estimators because PSM is more robust to extreme values of the propensity scores \citep{kang2007demystifying,waernbaum2012model,busso2014new}. See supplementary information for a discussion comparing PSM and AIPW from a theoretical and practical point of view.

\item[{(2)}] Matching on high-dimensional covariates will result in noticeable bias because the search for close matches becomes increasingly difficult in higher dimensions \citep{abadie2006large,yang2016propensity,zhao2022double,zhao2023advances}.

\item[{(3)}] The proposed resampling-based variance estimator is expected to outperform other existing approaches \citep{adusumilli2022bootstrap}.

\end{itemize}

Bias, empirical variance, the average of variance estimates, and the coverage rate for nominal 95\% confidence intervals are obtained. The performance of the point estimators are evaluated using bias and empirical variance. Any deviation of the averaged variance estimates from the empirical variance reflects bias in variance estimation. The coverage rate for nominal 95\% confidence intervals of the estimated log-marginal hazard ratios is used to measure the overall performance of the estimation procedures. 

\subsection{Data generating mechanism}

We consider a sample size of $n=1000$ throughout. For individuals $i=1,...,n$, we simulate the following data: 
\begin{enumerate}[label=\textbf{Step \arabic*:},leftmargin=*]
\item vector of baseline covariates $X=(X_{1},\cdots,X_{6})^\T$, each independently generated from an exponential distribution with mean 1.
\item indicator of assignment to treatment $W$ generated from the propensity score model: $\text{logit} \; P(W=1\mid X)=\theta_{0}+X^{\T}\theta_{1}$, where we let $(\theta_{0},\theta_{1}^{\T})$ be $(-3.0,0.5\times\mathbf{1}_{6\times 1}^\T)$, $(-4.5,0.75\times\mathbf{1}_{6\times 1}^\T )$, and $(-5.0,\mathbf{1}_{6\times 1}^\T )$ to represent weak, moderate, and strong levels of confounding, leading to strong, moderate, and weak covariate overlap. The distribution of the true propensity scores is shown in Figure 1. We also examine a scenario where there is no confounding, i.e. each subject has a true propensity score of 0.5 (see supplementary information). 

\item counterfactual outcome under control $T^{(0)}$ generated from an exponential distribution with survival function $S^{(0)}(t)=\exp(-\lambda_{0}t)$; counterfactual outcome under treatment $T^{(1)}$ generated by first drawing $u$ from a uniform distribution with support $[0,1]$ and then solving $$\left\{ \prod_{k=1}^{6}\left(1-\eta_{k}t\right)\right\} \exp\left[\left\{ X^{\T}\eta-\lambda_{0}\exp(\beta_{0})\right\} t\right]-1+u=0$$
for $t$, where $\eta_{1}=\cdots=\eta_{6}=-2$ and $\eta=\left(\eta_1,\cdots,\eta_6\right)^{\T}$. We consider three choices for the estimand $\beta_0$ and let $\lambda_{0}=6$ for $\beta_0=0$ and $\beta_0=0.5$, and $\lambda_{0}=15$ for $\beta_0=-0.5$. This method for simulating the counterfactual outcomes has been used to guarantee the PH assumption \citep{wang2022instrumental}. See supplementary information for justification.

\item actual true event time is calculated as $T=(1-W)T^{(0)}+WT^{(1)}$. 

\item event time $\Delta=I(T<C)$ and event indicator $U=\min(C,T)$, where $C$ is generated from a uniform distribution with support $[0,a]$ and $a$ is set so that between $20\%$ to $30\%$ individuals are censored.
\end{enumerate}

In sum, we allow the following two factors to vary in our simulation designs: covariate overlap (weak, moderate, strong, and very strong) and the true log of marginal hazard ratio (0, -0.5, 0.5). This gives rise to a total of 12 scenarios. For each scenario, we simulate 1000 datasets.

\subsection{Methods}

We compare the following estimators of the log-marginal hazard ratio:
\begin{itemize}
\item[{(i)}] 
$\widehat{\beta}_{\nai}$, the unadjusted estimator obtained by fitting the Cox PH model with the treatment status as the only covariate, which serves as a benchmark for demonstrating the level of confounding bias; 
\item[{(ii)}]  
$\widehat{\beta}_{\ipw}$,
the IPW estimator obtained by fitting a weighted Cox PH model  with the
treatment status as the only covariate, where each observation is weighted by the inverse of
the probability of receiving the actual treatment; 
\item[{(iii)}]  
$\widehat{\beta}_{{\rm aipw}}$, the AIPW estimator, where the working outcome model is the Cox PH model with all baseline covariates and
treatment status as covariates \citep{tchetgen2012parametrization};
\item[{(iv)}]  $\widehat{\beta}_{{\rm m.x}}$, the estimator based on matching on all covariates, which works the same way as the PSM estimator, except during the matching stage, nearest neighbors are determined based on the Euclidean distance between vectors of covariates rather than between propensity scores; the number of matches is set to $M=1$;

\item[{(v)}]  $\widehat{\beta}_{\psm,0}$, the PSM estimator 
based on the true propensity score; the number of matches is set to $M=1$;
\item[{(vi)}]  $\widehat{\beta}_{\psm}$, the PSM estimator 
based on the estimated propensity score; two versions of the PSM estimator corresponding to two different choices for the number of matches ($M=1$ and $M=5$) are considered.
\end{itemize}

For variance estimation of $\widehat{\beta}_{\nai}$, $\widehat{\beta}_{\ipw}$, and $\widehat{\beta}_{{\rm m.x}}$, we use the robust output from the
standard software. 
The nonparametric bootstrap is used for estimating the variance of $\widehat{\beta}_{\aipw}$. For $\widehat{\beta}_{\psm,0}$, we use the proposed asymptotic variance estimator without the adjustment term. 
We compare four variance estimators for $\widehat{\beta}_{\psm}$: (i) the robust output from the standard software, (ii) the proposed asymptotic variance estimator (iii)  the
nonparametric bootstrap and (iv) the proposed double-resampling procedure with $q_N=5$.

\subsection{Sensitivity analysis}

For each simulation scenario above, we examine the effect of misspecifying the propensity score model on the three propensity score based approaches ($\widehat{\beta}_{\ipw}$, $\widehat{\beta}_{\rm aipw}$, and $\widehat{\beta}_{\psm}$). We also investigate the sensitivity of the double-resampling approach to a different choice for the number of strata ($q_{N}=10$). 

\subsection{Simulation study results}

Figure \ref{fig:Simulation-results} and Table \ref{tab:Simulation-results} show the results when true $\beta_{0}=0$ and the propensity score model is correctly specified.
The results for the misspecified propensity score model and results for $\beta_{0}=-0.5$ and $\beta_{0}=0.5$ are presented
in the supporting information. The unadjusted estimator $\widehat{\beta}_{\nai}$
has a severe bias and thus barely captures $\beta_{0}$, which becomes
worse as the covariate overlap between the treatment
groups becomes weaker. The matching estimator based on all covariates $\widehat{\beta}_{\xm}$ has the second-largest bias following the naive estimator, $\widehat{\beta}_{\nai}$, confirming the theoretical result in \citet{abadie2006large} that matching on more than one covariate may lead to a biased matching estimator. 
The IPW estimator $\widehat{\beta}_{\ipw}$ and the AIPW estimator $\widehat{\beta}_{\aipw}$ 
greatly reduce the bias; however, as shown in Figure \ref{fig:Simulation-results}, they are unstable when the two treatment groups have weak overlap in propensity scores. Moreover,  $\widehat{\beta}_{\aipw}$ proposed by \citet{tchetgen2012parametrization} 
is supposed to be doubly robust in the sense that its consistency relies only on the correct specification of either the propensity score model or a working outcome mean model given the covariates. However, in practice, specifying a correct outcome mean model that is compatible with the marginal structural model is difficult. The posited Cox regression model in the AIPW estimator is a convenient choice but is misspecified. Thus, the AIPW estimator is no longer doubly robust.
For all investigated simulation scenarios, $\widehat{\beta}_{\psm}$ can significantly reduce the bias and are more stable when the two treatment groups have weak overlap in propensity scores compared to $\widehat{\beta}_{\ipw}$ and $\widehat{\beta}_{\aipw}$. The PSM estimator with $M=5$ provides a smaller variance than the counterpart with $M=1$. This is because increasing the number of matching can improve the efficiency of the PSM estimator. As a trade-off, when the sample size is small, a large number of matching could lead to a slight increase in bias.  Moreover, the PSM estimators possess coverage rates around 95\% using the proposed asymptotic variance estimation and double-resampling. When the two treatment groups have weak overlap, it is noticed that the asymptotic variance estimator tends to underestimate the variance and may result in negative values. In contrast, the double-resampling method recovers the 95\% confidence level better than the asymptotic method for such cases.  Comparing to the proposed inference approach, using the standard software's robust method and naive bootstrap method always overestimate the variances of $\widehat{\beta}_{\psm,0}$ and $\widehat{\beta}_{\mathrm{psm}}$.
Thus, our  proposed variance estimation approach is apparently
beneficial for making a reliable inference.

\begin{table}[h]
\caption{\label{tab:Simulation-results}Simulation results: bias ($\times10^{2}$)
and variance ($\times10^{3}$) of the point estimator of $\beta_{0}$,
coverage ($\%$) of $95\%$ confidence intervals based on $1,000$
Monte Carlo samples with true $\beta_{0}=0$}
\resizebox{\textwidth}{!}{%
\begin{tabular}{llllllllllllll}
\hline
\multicolumn{1}{c}{}                                                                 &            & Bias & Var & VE   & CR     & Bias & Var  & VE   & CR     & Bias & Var  & VE   & CR     \\
\multicolumn{1}{c}{Level of covariate overlap}                                                &            & \multicolumn{4}{l}{Strong}   & \multicolumn{4}{l}{Medium}  & \multicolumn{4}{l}{Weak}  \\ \hline
$\widehat{\beta}_{\nai}$                                                                  &            & 54.6 & 4.9 & 5.1  & 0.0\%  & 70.6 & 5.2  & 5.3  & 0.0\%  & 80.9 & 5.2  & 5.7  & 0.0\%  \\
$\widehat{\beta}_{\ipw}$                                                                  &            & 0.3  & 6.1 & 8.3  & 97.8\% & 1.0  & 12.5 & 14.6 & 95.6\% & 4.1  & 22.2 & 18.9 & 91.2\% \\
$\widehat{\beta}_{\aipw}$                                                                   &            & 0.1  & 5.3 & 6.0  & 95.9\% & 0.2  & 9.1  & 12.0 & 96.1\% & 2.3  & 31.3 & 40.6 & 96.5\% \\
$\widehat{\beta}_{\xm}$                                                                   &            & 6.9  & 5.9 & 8.8  & 92.7\% & 11.4 & 8.2  & 11.8 & 84.0\% & 17.7 & 8.2  & 11.7 & 64.4\% \\
$\widehat{\beta}_{\psm,0}$                                                                &            & 0.5  & 6.6 & 10.9 & 98.5\% & 1.6  & 10.6 & 17.9 & 97.4\% & 3.7  & 16.1 & 25.4 & 96.1\% \\
\multirow{4}{*}{\begin{tabular}[c]{@{}l@{}}$\widehat{\beta}_{\psm}$\\ (M=1)\end{tabular}} & software   & 0.7  & 6.3 & 11.1 & 98.9\% & 1.3  & 10.5 & 18.3 & 97.9\% & 3.9  & 17.2 & 26.0 & 95.8\% \\
                                                                                     & asymp      & 0.7  & 6.3 & 6.5  & 95.1\% & 1.3  & 10.5 & 10.6 & 94.2\% & 3.9  & 17.2 & 16.4 & 91.3\% \\
                                                                                     & naiveboot  & 0.7  & 6.3 & 9.8  & 98.8\% & 1.3  & 10.5 & 17.0 & 98.7\% & 3.9  & 17.2 & 25.1 & 96.3\% \\
                                                                                     & double-rsp & 0.7  & 6.3 & 8.2  & 97.7\% & 1.3  & 10.5 & 13.6 & 97.5\% & 3.9  & 17.2 & 24.1 & 95.4\% \\
\multirow{4}{*}{\begin{tabular}[c]{@{}l@{}}$\widehat{\beta}_{\psm}$\\ (M=5)\end{tabular}} & software   & 1.5  & 5.0 & 8.1  & 99.0\% & 3.2  & 8.0  & 12.3 & 96.6\% & 5.1  & 10.1 & 16.3 & 97.0\% \\
                                                                                     & asymp      & 1.5  & 5.0 & 5.0  & 94.2\% & 3.2  & 8.0  & 7.3  & 91.9\% & 5.1  & 10.1 & 9.2  & 90.6\% \\
                                                                                     & naiveboot  & 1.5  & 5.0 & 7.6  & 99.0\% & 3.2  & 8.0  & 12.2 & 96.8\% & 5.1  & 10.1 & 16.6 & 97.5\% \\
                                                                                     & double-rsp & 1.5  & 5.0 & 6.1  & 96.1\% & 3.2  & 8.0  & 9.5  & 95.2\% & 5.1  & 10.1 & 13.9 & 95.8\% \\ \hline
\end{tabular}} Note: "Var" is the variance of point estimates of $\beta_{0}$
across 1000 simulated datasets; "VE" is the average variance estimation
for the point estimators over simulations, thus VE minus Var reflects
the bias in estimated variance; "CR" is the empirical coverage rate
of $95\%$ confidence intervals. The proposed PSM estimator $\widehat{\beta}_{\psm}$ is calculated with number of matches $M=1$ and 5. Four types of variance estimates for $\widehat{\beta}_{\psm}$ were compared: "software", output from
the standard software; "asymp", the proposed asymptotic variance
estimation; "naiveboot", the naive nonparametric bootstrap; "double-rsp",
the proposed double-resampling method. 
\end{table}

In the extended simulations (see the supporting information), we conduct a sensitivity analysis on a different choice of the number of strata, $q_{N}$, for the double-resampling approach. The results show that the estimated variances are similar with a difference less than $10^{-3}$ across all scenarios, indicating that the variance estimator is insensitive to the choice of $q_{N}$. Moreover, when the propensity score model is misspecified,  $\widehat{\beta}_{\ipw}$, $\widehat{\beta}_{\aipw}$ and $\widehat{\beta}_{\psm}$ become biased. Nonetheless,  $\widehat{\beta}_{\psm}$ is still more robust than $\widehat{\beta}_{\ipw}$ and $\widehat{\beta}_{\aipw}$. The AIPW estimator performs the worst among all estimators when the level of covariate overlap is medium and weak, consistent with the results in \citet{kang2007demystifying} that the bias and variance of AIPW estimator can increase dramatically when both the propensity score and outcome models are misspecified. For the scenario where there is no confounding, while $\widehat{\beta}_{\psm}$ with $M=5$ results in similar bias than $\widehat{\beta}_{\ipw}$ and $\widehat{\beta}_{\aipw}$, the PSM estimators are generally less efficient than the weighting approaches. Among all the variance estimators of $\widehat{\beta}_{\psm}$, the asymptotic variance estimator shows the smallest bias.

\begin{figure}
    \begin{center}
           \includegraphics[width=5in]{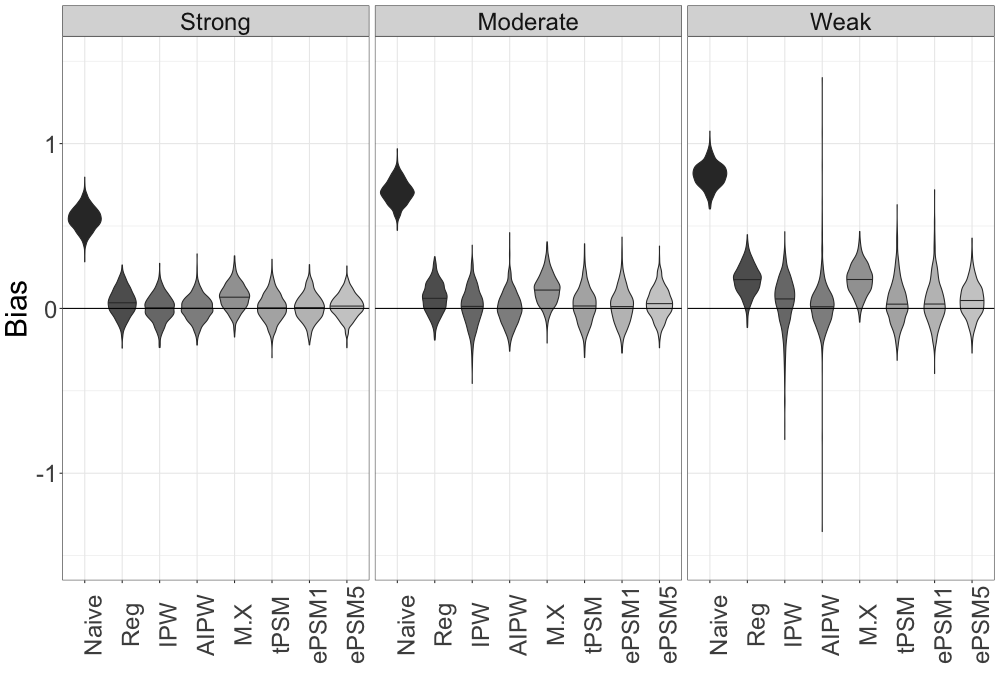}
    \caption{\label{fig:Simulation-results} Simulation results of various point estimators of $\beta_{0}$ from 1000 Monte Carlo simulated datasets. $\mathtt{Naive}$ is the $\widehat{\beta}_{\nai}$ estimator; $\mathtt{IPW}$ is the $\widehat{\beta}_{\ipw}$ estimator; $\mathtt{AIPW}$ is the $\widehat{\beta}_{\aipw}$ estimator; $\mathtt{M.X}$ is the $\widehat{\beta}_{{\rm m.x}}$ estimator; $\mathtt{tPSM}$ is the $\widehat{\beta}_{\psm,0}$ estimator; $\mathtt{ePSM1}$ is the $\widehat{\beta}_{\psm}$ estimator with $M=1$; $\mathtt{ePSM5}$ is the $\widehat{\beta}_{\psm}$ estimator with $M=5$; some AIPW point estimations with absolute value greater than 1.5 are excluded from the plot in the scenario where the level of covariate overlap is strong.} 
    \end{center}
\end{figure}

\section{An application}\label{sec:An-application}

Non-small cell lung cancer (NSCLC) is the most commonly diagnosed
lung cancer; typically, around half the NSCLC patients receiving chemotherapy
will receive additional treatment in the post-progression setting,
i.e., second-line treatment setting, where "carboplatin + paclitaxel"
and "erlotinib" are two historically commonly used treatments;
see \citet{cui2018application}. In this section, we use the IMS Health
Oncology electronic medical record (EMR) data to conduct a comparative
effectiveness analysis of the two treatments in the second-line setting
with the proposed causal inference of hazard ratio based on PSM estimator and other existing estimators mentioned in
the previous section.

The EMR data is deidentified observational patient-level clinical
data with demographic and baseline clinical characteristics collected
from medium and large community-based oncology practices across 50
states of the USA. The dataset used contains a retrospective cohort
of 10,634 eligible patients at least 18 years old who received at
least two lines of therapy, from 1 January 2007 to 31 December 2014;
see \citet{cui2018application} for details. 

Overall survival was defined as the time from the start date of second-line therapy to the death date. The death date of patients was assumed to be the last visit if a \textit{sufficient period} had elapsed between the last visit and the end of the EMR data; patients with the time between last visit and the end of dataset shorter than a \textit{sufficient period} were censored at the date of the last visit. Here we define a \textit{sufficient period} as at least twice the average visit interval in the 3 months prior to last visit, given that a patients normally have multiple visits to the clinic. Patients alive at the end of the time period were censored at the end date of the dataset. Missing data were classified into
its own category for each categorical variable; see \citet{de2018transitional}.
Among the eligible patients in the dataset, 1241 patients were treated
with ``carboplatin + paclitaxel", and 895 patients received single-agent
``erlotinib" as second-line therapy, while the remaining patients received one of three other therapies. For illustration of our approach, we subsetted the data to contain only patients receiving ``carboplatin + paclitaxel" or ``erlotinib," to compare the treatment effects of the two therapies.

The propensity scores were estimated using a logistic regression model
with predictors: age at the initiation of second-line therapy, gender,
race, region, disease stage at initial diagnosis, Eastern Cooperative
Oncology Group performance status score at initiation of second-line
therapy, facility types of academic or community cancer center, year
of index diagnosis, and days from index diagnosis to initiation of
second-line therapy; see \citet{cui2018application} for details.

Table \ref{tab:HR} shows estimated log hazard ratio $\widehat{\beta}$,
point estimates and 95\% confidence intervals for the hazard ratio
of "carboplatin + paclitaxel" to "erlotinib" based on the unadjusted
estimator $\widehat{\beta}_{\nai}$, the IPW estimator $\widehat{\beta}_{\text{ipw}}$, and matching estimator based on the full set of covariates $\widehat{\beta}_{\xm}$,  with the robust variance estimation from the standard software for constructing Wald confidence intervals. We also include the AIPW estimator proposed by \citet{tchetgen2012parametrization} where the working outcome model is the Cox PH model based on the same set of covariates as the propensity score model and treatment indicator. The variance of $\widehat{\beta}_{\aipw}$ is estimated by a naive bootstrap procedure. For the proposed inference based on the PSM estimator $\widehat{\beta}_{\psm}$ with the number of matches $M=1$ and 5, we constructed both Wald confidence intervals based on the empirical
asymptotic variance estimator and bootstrap percentile confidence
intervals with the proposed double-resampling method.

\begin{table}[h]
\begin{center}
    \caption{\label{tab:HR} Estimated $\beta_{0}$ and hazard ratio of comparing
"carboplatin + paclitaxel" to "erlotinib" }
{ \begin{tabular}{llccc} 
\hline
      &            & $\widehat{\beta}$  & Estimated Hazard Ratio & 95\% Confidence Interval  \\ 
\hline
Naive &            & -0.085             & 0.918                  & (0.851, 0.991)            \\
IPW   &            & -0.065             & 0.937                  & (0.850, 1.033)            \\
AIPW  &            & -0.075             & 0.928                  & (0.842, 1.023)              \\
X.M   &            & -0.076             & 0.927                  & (0.836, 1.028)             \\
PSM  & asymp      & -0.058             & 0.944                  & (0.839, 1.062)            \\
(M=1) & double-rsp & -0.058             & 0.944                  & (0.830, 1.074)             \\
PSM  & asymp      & -0.009             & 0.991                  & (0.868, 1.131)             \\
(M=5) & double-rsp & -0.009             & 0.991                  & (0.903, 1.087)             \\
\hline
\end{tabular}}
\\
\end{center}
 Note: The proposed PSM estimator is calculated under the number of matches $M=1$ and 5. For each PSM estimators, its variance is estimated by the proposed asymptotic variance, "asymp" and the proposed double-resampling method,  "double-rsp". 
\end{table}


All adjusted methods give larger hazard ratio estimates than the unadjusted
naive method. Although the point estimates of the PSM approach with $M=1$ and $M=5$ indicate that
"carboplatin + paclitaxel" might be slightly more advantageous, the 95\% confidence intervals using both inference approaches include
1, implying insufficient evidence for a statistically
significant difference at the 0.05 level in the effectiveness comparison. Similar to the simulation results, the PSM estimator with $M=5$ provides narrower interval than the one with $M=1$ for the double-resampling approach. The 95\% confidence intervals of the other adjusted approaches presented in Table \ref{tab:HR}, including IPW, AIPW, and matching on covariates, also contain 1, reaching the same conclusion as the PSM estimators. 


\section{Discussion and future studies}\label{sec:Discussions}

PSM is prevailing in practice to handle confounding
in observational studies. We establish the statistical properties
of the PSM estimator of the marginal causal
hazard ratio based on matching with replacement and a fixed number
of matches. We also propose a double-resampling technique for variance
estimation that takes into account the uncertainty due to propensity
score estimation prior to matching. Existing simulation studies have indicated that for estimation of the average treatment effect, IPW or AIPW can perform unstably when extreme values of the estimated propensity scores are present, and instead PSM or alternative approaches such as the overlap weights should be considered and leveraged. Our simulation results echo the previous findings in the survival context for estimation of the marginal hazard ratio.

The theoretical results established in this article hold for any fixed number of matches $M$. We illustrated two different choices for $M$ in the simulation study and real data analysis in order to show the validity of the resampling-based variance estimator. Intuitively, different choices of $M$ affect the PSM estimator through a bias-variance trade-off. In practice, often a small $M$ is recommended, since the increase in bias is often more significant than the reduction in the variance \citep{abadie2006large,austin2010statistical}. Following the cross-validation idea for the matching estimator with continuous outcomes proposed in \citet{xu2023advances}, we leave the development of an optimal way of choosing $M$ for the PSM estimator as a topic for future research.

    In this work, we derived the asymptotic distribution of the PSM estimator of the marginal hazard ratio assuming a generalized linear model for the propensity score. Notably, the same proof technique can potentially be extended to cases when the propensity scores are estimated assuming certain semiparametric or nonparametric models (e.g., single-
or multiple-index models; \citet{huang2017joint}). Take the semiparametric single-index model for example. In this case, the key insight
is that PSM does not rely on the exact functional form
of the propensity score model but a sufficient dimension reduction
of the mean space of $A_{i}$ given $X_{i}$. We have $e(X_{i})=e(X_{i}^{T}\theta_{0})$,
where $\theta_{0}\in \mathbb{R}^{p}$ is a vector of unknown parameters
and $e(\cdot)$ is left unspecified and does not require a restrictive
parametric model assumption. Although the link function does not permit a root-$n$ consistent estimator, the index $X^{\T}\theta$ enables a root-$n$ consistent
estimator, denoted as $X^{\T}\widehat{\theta}$ \citep{huang2017joint}.
Then it suffices to implement the PSM based on $X^{\T}\widehat{\theta}$. More specifically, in the proof of Theorem \ref{Thm2}, under a parametric propensity score
model, $P^{\theta_{0}}$ for (\ref{eq:LAN}) is
naturally the probability measure governed by the likelihood function
of $\theta_{0}$. Under the single- or multiple index propensity
score model, following \citet{andreou2012alternative}, we can formulate
a working model for $P^{\theta_{0}}$ based on the asymptotic distribution of $X^{\T}\widehat{\theta}$.
The remaining steps for the proof would remain the same and the inferential framework can be carried over. When assuming other nonparametric machine learning models for the propensity score, the same procedure for implementing the PSM estimator can still be applied, and we leave the development of inferential frameworks in those cases as future endeavors.

It is important to draw connections between clinical trials and observational
studies from both design and analysis perspectives, which highlights
the advantages of PSM and also motivates several
future research directions. Similar to clinical trial designs, the
matching step uses only the covariate and treatment information and
does not touch the outcome data. Therefore, it mitigates the possibility
of data snooping and dredging. As discussed in the introduction, PSM alone emulates a completely randomized trial, which
may not be the most efficient. Stratified block randomization is often
used to improve the complete randomization in clinical trials.
Thus, we can imagine that combining stratification and PSM can also improve PSM alone in
observational studies. Moreover, in trial data analysis, instead of
a simple analysis of the outcome data, ANCOVA can be used to borrow
information from auxiliary information. Thus, it is important to continue
the development of more efficient analysis methods, such as general M and Z estimators \citep{van1998asymptotic}, for the matched
observational data. We leave these topics to future research.


This work focuses on estimating the causal PH ratio, a scalar estimand,
that summarizes the treatment effect over a certain period of time.
There are many other treatment effect estimands for survival outcomes,
such as the restrictive mean survival times, restrictive mean lost
times, the difference in survival medians, and so on \citep{yang2020smim}.
In the context of survival analyses, \citet{chen2001causal} proposed
a regression model approach to estimate the average causal effect
of restricted mean survival times. \cite{xie2005adjusted} developed
the adjusted Kaplan-Merier estimators of treatment-specific survival
functions using IPW. \cite{zhang2012double}
combined the regression method of \cite{chen2001causal} and the inverse
probability weighted Nelson-Aalen estimator, resulting in a doubly
robust estimator of the average causal effect of restricted mean survival
times. \cite{diaz2019statistical} proposed data-adaptive doubly
robust estimators of treatment-specific survival functions. In our
future work, we will develop matching estimators of general causal
estimands for survival outcomes and compare them with existing
approaches. %

\section*{Acknowledgments}
Yang is supported in part by U.S. National Science Foundation and National Institute of
Health.

\bibliographystyle{apalike}
\bibliography{refs}

\pagebreak{}
\appendix

\section*{APPENDIX}

\section{Comparison between AIPW and PSM for estimating the marginal hazard ratio}

The AIPW estimator is also a popular approach for estimating the average treatment effect for a continuous outcome and has been adapted to estimate the marginal hazard ratio in the survival context \cite{tchetgen2012parametrization}. The AIPW estimator is consistent as long as either the propensity score model or the outcome model is correctly specified. In comparison, PSM does not rely on the outcome model and is consistent if the propensity score model is correctly specified. The AIPW estimator is semiparametrically efficient when both models are correctly specified. In comparison, based on Theorem 1, the PSM estimator based on a correctly specified propensity score model will be less efficient than AIPW asymptotically as it does not in general attain the semiparametric efficiency bound. However, as shown in our simulation study, in cases where the overlap of covariate distributions between the two treatment groups is poor, PSM can yield more stable marginal hazard ratio estimates than AIPW in finite samples. This phenomenon is in line with findings from earlier numerical studies involving survival or continuous outcomes \citep{busso2014new,austin2017performance}. This is in part because AIPW weights the outcomes by the inverse of the estimated propensity scores, whereas PSM reuses the original value of the outcomes regardless of how extreme the propensity scores are. \\ \indent Trimming offers a general solution to violation of the overlap assumption by excluding units with extreme propensity scores from the analysis \citep{crump2009dealing,yang2018asymptotic}. However, trimming methods necessary alter the target estimand and restrict inference to regions of the covariate space with sufficient overlap. PSM is more resistant to the violation of the overlap assumption because even for units with extreme propensity scores, their missing potential outcomes can still be imputed with minimal bias as long as they have close matches from the opposite treatment group. Therefore, trimming is not needed for PSM except maybe when the matching discrepancies become noticeably large.

\section{Asymptotic variance estimation}

In this section, we discuss estimation of the large sample variances of $\widehat{\beta}$ adjusting for first step estimation of the propensity score. Recall that we have that 
$$V_2=\{A(\beta_0)\}^{-1}\{V_{G}-c^{\T}\mathcal{I}_{\theta_{0}}^{-1}c\}\{A(\beta_0)\}^{-1}.$$
We will estimate each component on the right hand side of the equation separately. We first estimate the Fisher information $\mathcal{I}_{\theta_{0}}$ using
$$\widehat{\mathcal{I}}_{\theta_{0}}=\frac{1}{n} \sum_{i=1}^n  \frac{\dot{e}^2(X_i^{\T}\widehat{\theta})}{e(X_i^{\T}\widehat{\theta})(1-e(X_i^{\T}\widehat{\theta}))}   X_i X_i^{\T}.$$

Let $m_k\{\omega, e(X_i ^{\T}\widehat{\theta})\}$ denote the index of the $k$-th nearest neighbor matched to unit $i$ based on the estimated propensity scores. For estimation of $V_{G}$, we first create an imputed dataset $\left\{H_{i 1}^*(\omega), H_{i 2}^*(\omega)\right\}_{i=1}^n$ for $\omega=0,1$, where
\[H_{i 1}^*(\omega)=\begin{cases}
H_i(\omega) & \text { if } W_i=\omega \\
H_{m_1\left\{\omega, e(X_i ^{\T}\widehat{\theta})\right\}}(\omega) & \text { if } W_i \neq \omega
\end{cases}\]
and
\[H_{i 2}^*(\omega)=\begin{cases}
H_{m_1\left\{\omega, e(X_i ^{\T}\widehat{\theta})\right\}}(\omega) & \text { if } W_i=\omega \\
H_{m_2\left\{\omega, e(X_i ^{\T}\widehat{\theta})\right\}}(\omega) & \text { if } W_i \neq \omega.
\end{cases}\]
Then, $\widetilde{V}_{G}$ can be estimated by
$$\widehat{V}_{G}=\frac{1}{n} \sum_{i=1}^n\left\{\sum_{\omega=0}^1 H_{i 1}^*(\omega)\right\}^{ 2}+\frac{1}{n} \sum_{i=1}^n\left\{k_{\widehat{\theta},i}\left(W_i\right)+k_{\widehat{\theta},i}\left(W_i\right)^2\right\} \widehat{\sigma}_i^2,$$
where $\widehat{\sigma}_i^2=\sum_{k=1}^2\left[H_{i k}^*(\omega)-\frac{1}{2}\left\{H_{i 1}^*(\omega)+H_{i 2}^*(\omega)\right\}\right]^{ 2}=\frac{1}{2}\left\{H_{i 1}^*(\omega)-H_{i 2}^*(\omega)\right\}^{ 2}.$

Next we can construct an estimator of $c$ by averaging over the sample: 
$$\widehat{c}=\frac{1}{n}\sum_{i=1}^n \left[\frac{\widehat{\cov}\left\{ X_i,\mu_{H}(1,X_i)\mid e(X_i^{\T}\theta_{0})\right\} }{e(X_i^{\T}\widehat{\theta})}+\frac{\widehat{\cov}\left\{ X_i,\mu_{H}(0,X_i)\mid e(X_i^{\T}\theta_{0})\right\} }{1-e(X_i^{\T}\widehat{\theta})}\right]\dot{e}(X_i^{\T}\widehat{\theta}) .$$
\sloppy For estimation of the conditional covariance, we follow the same matching procedure to create an imputed dataset $\left\{X_{i 1}^*(\omega), X_{i 2}^*(\omega)\right\}_{i=1}^n$. Then $\widehat{\cov}\left\{X_i, \mu_{H}\left(\omega, X_i\right)  \mid e_\omega(X_i^{\T}\theta_{0})\right\}$ can be estimated by
$\frac{1}{2}\left\{X_{i 1}^*(\omega)-X_{i 2}^*(\omega)\right\}\left\{H_{i 1}^*(\omega)-H_{i 2}^*(\omega)\right\}.$

Finally, to estimate $A(\beta_0)$, we use
$$\widehat{A}(\beta_0)=\frac{1}{n} \sum_{i=1}^n \sum_{k=1}^K\left[\left\{\widehat{Q}(\widehat{\beta}, t_k)-\widehat{Q}(\widehat{\beta}, t_k)^{ 2}\right\}\left\{ \sum_{\omega=0}^1 \de \overline{N}_i^{*(\omega)}\left(t_k\right)\right\}\right],$$
where $\{t_1,...t_K\}$ are distinct observed time points. Putting everything together, our final estimator of the asymptotic variance is 
$$
\widehat{V}_{2}=\{\widehat{A}(\beta_0)\}^{-1}\{\widehat{V}_{G}-\widehat{c}^{\T}\widehat{\mathcal{I}}_{\theta_{0}}^{-1}\widehat{c}\}\{\widehat{A}(\beta_0)\}^{-1}.
$$

\section{Regularity conditions and Lemmas}

In this section, we provide the regularity conditions and lemmas.
For simplicity, we introduce more notations. 
Let $\mathcal{N}\equiv \{\theta: ||\theta-\theta_0||<\epsilon\}$ be a neighborhood of $\theta_0$  given an $\epsilon>0$. We use
a generalized linear specification for the propensity score, $e(x) = e(X^{\T}\theta)$ where $e(\cdot)$ is a link function. Moreover, denote $S_{n}(\beta,t)=n^{-1}\sum_{j=1}^{n}\{1+k_{\widehat{\theta},i}/M\}I(W_{j}=\omega)\exp(\beta W_{i})Y_{j}(t)$ and $S_{n}'(\beta,t)=\partial S_n(\beta,t)/\partial \beta =n^{-1}\sum_{j=1}^{n}\{1+k_{\widehat{\theta},i}/M\}I(W_{j}=\omega)\exp(\beta W_{i})Y_{j}(t)W_{i}$, then $\widehat{Q}(\beta,t)=S_{n}(\beta,t)/S_{n}'(\beta,t)$. 

\begin{assumption}\label{asump:regua} 
The following regularity conditions hold:
\begin{enumerate}[label=(\roman*)]
    \item  $\theta_0 \in int(\Theta)$ with $\Theta$ compact, $X$ has a bounded support and $\E[X^{\T}X]$ is non-singular; 
    \item  $e(\cdot):\mathbb{R}\rightarrow (0,1)$ is twice continuously differentiable with strictly bounded first and second derivatives, $\dot{e}(\cdot)$ and $\ddot{e}(\cdot)$ where $\dot{e}(\cdot)$ is strictly positive; 
\item  $\forall \theta \in \mathcal{N}$, the random variable $e(X^{\T}\theta)$ is continuously distributed with interval support, and its pdf $g_\theta(\cdot)$ is uniformly Lipschitz continuous over $\mathcal{\theta}$; 
\item  there exists a component of $X$ that is continuously distributed, has nonzero coefficient in $\theta_0$, and has a continuous density function conditional on the rest of $X$; 
\item $\forall \theta \in \mathcal{N}$ and $\omega=0,1$, $\mu_{H}(\omega,X)$ and $\sigma^2_{H}(\omega,X)$ is Lipschitz-continuous in $p$ with the Lipschitz constants independent of $\theta$; 
\item $E\left(|H_{i}(\omega)|^{4+\delta}|W_{i}=\omega,X_{i}=x\right)$
is uniformly bounded over the support of $X$, for some $\delta>0$; 
\item $\E\left\{ \de M^{(1)}(t)\mid p\right\} $ is Lipschiz continuous in $p$; 
\item There exists $\tau>0$ is such that $\int_0^\tau \lambda_0(t)d(t)<\infty$; 
\item there exists a neighborhood $\mathcal{B}$ of $\beta_0$ and functions $s_0(\beta,t)$ and $s_1(\beta,t)$such that $sup_{t\in [0,\tau],\beta\in \mathcal{B}}||S_n(\beta,t)-s_0(\beta,t)||\rightarrow_p 0$ and $sup_{t\in [0,\tau],\beta\in \mathcal{B}}||S_n'(\beta,t)-s_1(\beta,t)||\rightarrow_p 0$ as $n\rightarrow \infty$ ; 
\item the function $s_0(\beta,t)$ and $s_1(\beta,t)$ is bounded away from 0 on $\mathcal{B}\times [0,\tau]$; 
\item $\forall \epsilon>0$, there exists $\delta>0$ such that $||\beta-\beta_0||< \delta$ implies $|s_l(\beta,t)-s_l(\beta_0,t)|<\epsilon$, $\forall t \in [0,\tau]$, $l=0,1$;
\item $n^{-1}\de G_{n}(\tilde{\beta})/\de\beta>0$ for $\forall \tilde{\beta} \in \mathcal{B}$. 
\end{enumerate}
\end{assumption}

The regularity conditions (i)-(vi) are adopted from \cite{abadie2016matching} and \cite{adusumilli2022bootstrap},  modified for survival outcomes. The regularity conditions (vii)-(xi) are standard in the survival analysis literature and are often
assumed for technical convenience. We note here that since $\beta$ is a scale and $\omega$ only takes values in $\{0,1\}$, the first and second derivatives of $S_n(\beta,t)$ with respect to $\beta$ are the same. Therefore, it suffices to impose the regulation conditions on the first derivative in (vii)-(xi).

\begin{assumption}\label{asump:regua2}
Additional regularity conditions for variance estimation hold:
\begin{enumerate}[label=(\roman*)]
\item  Uniformly over all $\theta \in \mathcal{N}$, it holds under $\mathbb{P}_0$, ${N}^{-1}\sum_{i=1}^N\left(\widehat{e}_{1i}(\theta)-e_{1i}(\theta)\right)^2=o_p(N^{-\xi})$ and ${N}^{-1}\sum_{i=1}^N\left(\widehat{e}_{2i}(W_i;\theta)-e_{2i}(W_i;\theta)\right)^2=o_p(N^{-\xi})$ for some $\xi>0$; 
\item the number of quantile partitions satisfies $q_N\rightarrow \infty$ and $q_N^{2+\eta}\rightarrow 0$ as $N\rightarrow \infty$ for some $\eta>0$.   
\end{enumerate}
\end{assumption}
The Assumption \ref{asump:regua2} are adopted from \cite{adusumilli2022bootstrap} to ensure the validity of the double-resampling bootstrap estimator. 

Lemma \ref{lemma A.11} below is Lemma S.11 in \cite{abadie2016matching}, which is useful in the proofs of our results.
\begin{lemma}\label{lemma A.11}Suppose that $(W_{1},X_{1})$,...,$(W_{n},X_{n})$
are independent and identically distributed, where $X_i$ is a scalar
continuous variable with a bounded support. Suppose also that $\sigma_{H}^{2}(\omega,x)$
is uniformly bounded over the support for $X$. Let $n_{\omega}=\sum_{i=1}^{n}I(W_{i}=\omega)$ be the number of individuals
received treatment $\omega$, and $p^*=\pr(W=1)>0$ and $f_{\omega}(X)$ be the density function of the scalar
continuous variable X when $W=\omega$.Then, for a given
$\omega$, 
\[
\frac{1}{n_{\omega}}\sum_{i=1}^{n}I(W_{i}=\omega)\sigma_{H}^{2}(\omega,X_{i})k_{i}\rightarrow M \E\left\{ \sigma_{H}^{2}(\omega,X)\left(\frac{p^*}{1-p^*}\right)^{1-2\omega}\frac{f_{1-\omega}(X)}{f_\omega(X)}\mid W_{i}=\omega\right\} ,
\]
and 
\begin{eqnarray*}
\frac{1}{n_{\omega}}\sum_{i=1}^{n}I(W_{i}=\omega)\sigma_{H}^{2}(\omega,X_i)k_{i}^{2} & \rightarrow  M\E\left\{ \sigma_{H}^{2}(\omega,X)\left(\frac{p^*}{1-p^*}\right)^{1-2\omega}\frac{f_{1-\omega}(X)}{f_\omega(X)}\mid W_{i}=\omega\right\} \\
 &  +\frac{M(2M+1)}{2}\E\left\{\sigma_{H}^{2}(\omega,X)\left[\left(\frac{p^*}{1-p^*}\right)^{1-2\omega}\frac{f_{1-\omega}(X)}{f_\omega(X)}\right]^2\mid W_{i}=\omega\right\},
\end{eqnarray*}
in probability, as $n\rightarrow\infty$.

\end{lemma}

\section{Proof of the asymptotic unbiasedness of Equation (\ref{eq:linear S})}

This section includes three parts that follow the similar logic of proof. 
The first and the second parts provide some results useful for later sections.
The proof for the asymptotic unbiasedness of $n^{-1}G_{n}(\beta_{0})$ is located in the third part.

For $\omega=0,1,$ define $\de M^{(\omega)}(t)=\de N^{(\omega)}(t)-\de\Lambda_{0}(t)\exp(\beta_{0}\omega)Y^{(\omega)}(t)$.
From the standard theory for the counting process, $\de M^{(\omega)}(t)$
is a martingale process with respect to the population and its baseline hazard is $\Lambda_0(t)$. Next we will prove that 
$$
I(W_i=\omega)\{1+k_i/M\}\left\{\de N_i(t)-\de\Lambda_{0}(t)\exp(\beta_0\omega)Y_{i}(t)\right\}
$$
 is a martingale for the imputed pseudo-population which means that the imputed pseudo-population has similar covariates distribution with the target population.
First, we show that for $\omega=0,1,$
\begin{align}
 &  n^{-1}\sum_{i=1}^{n}I(W_{i}=\omega)\{1+k_{i}/M\}\left\{ \de N_{i}(t)-\de\Lambda_{0}(t)\exp(\beta_{0}\omega)Y_{i}(t)\right\} \nonumber \\
 & \rightarrow  \E\left\{ \de N^{(\omega)}(t)-\de\Lambda_{0}(t)\exp(\beta_{0}\omega)Y^{(\omega)}(t)\right\} =\E\left\{ \de M^{(\omega)}(t)\right\} ,\label{eq:lemma1}
\end{align}
as $n\rightarrow\infty$. We show (\ref{eq:lemma1}) for $\omega=1$.
The proof for $\omega=0$ is similar and therefore omitted. We express
(\ref{eq:lemma1}) for $\omega=1$ as
\begin{align*}
 &   n^{-1}\sum_{i=1}^{n}W_{i}\{1+k_{i}/M\}\left\{ \de N_{i}(t)-\de\Lambda_{0}(t)\exp(\beta_{0})Y_{i}(t)\right\} -\E\left\{ \de M^{(1)}(t)\right\} \\
 & =  n^{-1}\sum_{i=1}^{n}W_{i}\{1+k_{i}/M\}\left\{ \de N_{i}^{(1)}(t)-\de\Lambda_{0}(t)\exp(\beta_{0})Y_{i}^{(1)}(t)\right\} -\E\left\{ \de M^{(1)}(t)\right\} \\
 & =  n^{-1}\sum_{i=1}^{n}W_{i}\{1+k_{i}/M\}\de M_{i}^{(1)}(t)-\E\left\{ \de M^{(1)}(t)\right\} \\
 & =  n^{-1}\sum_{i=1}^{n}W_{i}\{1+k_{i}/M\}\left[\de M_{i}^{(1)}(t)-\E\left\{ \de M^{(1)}(t)\mid e(X_{i})\right\} \right]\\
 & \;\quad  +n^{-1}\sum_{i=1}^{n}(1-W_{i})\left[\E\left\{ \de M^{(1)}(t)\mid e(X_{m\{1,e(X_{i})\}})\right\} -\E\left\{ \de M^{(1)}(t)\mid e(X_{i})\right\} \right]\\
 & \; \quad  +n^{-1}\sum_{i=1}^{n}\E\left\{ \de M^{(1)}(t)\mid e(X_{i})\right\} -\E\left\{ \de M^{(1)}(t)\right\} \\
 & =  T_{1}+T_{2}+T_{3},
\end{align*}
where the second line follows by the consistent assumption, and
\begin{eqnarray}
T_{1} & = & n^{-1}\sum_{i=1}^{n}W_{i}\{1+k_{i}/M\}\left[\de M_{i}^{(1)}(t)-\E\left\{ \de M^{(1)}(t)\mid e(X_{i})\right\} \right],\nonumber \\
T_{2} & = & n^{-1}\sum_{i=1}^{n}(1-W_{i})\left[\E\left\{ \de M^{(1)}(t)\mid e(X_{m\{1,e(X_{i})\}})\right\} -\E\left\{ \de M^{(1)}(t)\mid e(X_{i})\right\} \right],\label{eq:T2}\\
T_{3} & = & n^{-1}\sum_{i=1}^{n}\E\left\{ \de M^{(1)}(t)\mid e(X_{i})\right\} -\E\left\{ \de M^{(1)}(t)\right\} .\nonumber 
\end{eqnarray}
Under Assumption \ref{asump:regua} (i) - (vi), \cite{abadie2006large} showed
that $k_{i}^{\delta}$ is bounded almost surely for any $\delta>0$,
and the discrepancy due to matching is $|{M}^{-1}\sum_{j\in \mathcal{J}_M(1,e(X_i))}e(X_j)-e(X_{i})|=O_p(n^{-1})$
for a scalar $e(X)$. It follows that $T_{1}$ is consistent for zero.
Moreover, under Assumption \ref{asump:regua} (vi), $T_{2}$ is consistent for zero. Lastly,
by the strong law of large numbers, $T_{3}$ is consistent for zero.
Therefore, (\ref{eq:lemma1}) follows. 
Since $n^{-1}\sumin I(W_i=\omega)(1+k_i/M) \left\{\omega-\widehat{Q}(\beta_0,t) \right\}\de M_i^{(\omega)}(t)$ is bounded, by dominated convergence theorem,
\begin{align}
n^{-1}G_n(\beta_{0})&=n^{-1}\sum_{\omega=0}^1\sumin \intin I(W_i=\omega) \{1+k_i/M\}\left\{\omega-\widehat{Q}(\beta_0,t) \right\}\de M_i^{(\omega)}(t) \label{eq:Sn-1}\\
 &\rightarrow \sum_{\omega=0}^1 \intin\left\{\omega-\widehat{Q}(\beta_0,t) \right\} \E\left\{\de M^{(\omega)}(t)\right\}=0.
\end{align}

\section{Proof for Theorem 1}


Taylor expansion of $G_{n}(\widehat{\beta})=0$ around $\beta_{0}$
leads to
\[
0=G_{n}(\widehat{\beta})=G_{n}(\beta_{0})+\frac{\de }{\de\beta}G_{n}(\tilde{\beta})(\widehat{\beta}-\beta_{0}),
\]
where $\tilde{\beta}$ is on the line segment between $\widehat{\beta}$
and $\beta_{0}$. Then, 
\begin{equation}
n^{1/2}(\widehat{\beta}-\beta_{0})=\left\{ n^{-1}\frac{\de G_{n}(\tilde{\beta})}{\de \beta}\right\} ^{-1}n^{-1/2}G_{n}(\beta_{0}).\label{eq:hat beta}
\end{equation}
Under Assumption \ref{asp:positivity} (ix), $n^{-1}\frac{\de G_{n}(\tilde{\beta})}{\de\beta}>0$ for $\tilde{\beta} \in \mathcal{B}$. Then, the reminder is to show the asymptotic distribution of $n^{-1/2}G_{n}(\beta_{0})$. 
\begin{theorem}\label{Thm:A}Suppose Assumptions \ref{asump:consistency}\textendash \ref{asp:positivity} and Assumption \ref{asump:regua}
hold and $X$ is a continuous scalar variable. Then,
\begin{equation}
n^{-1/2}G_{n}(\beta_{0})\rightarrow\N(0,\overline{V}_{G}),\label{eq:asymp S}
\end{equation}
in distribution, as $n\rightarrow\infty$, where
\begin{align}
\overline{V}_{G}=\sum_{\omega=0}^{1} E \left[\sigma_{H}^{2}(\omega,X)\left\{ \frac{2M+1}{2M}\frac{1}{p(\omega\mid X)}-\frac{1}{2M}p(\omega\mid X)\right\} \right]  + \; \E\left[\left\{ \mu_{H}(0,X)+\mu_{H}(1,X)\right\}^2 \right] .\label{eq:V_S}
\end{align}

\end{theorem}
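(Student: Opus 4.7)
The plan is to build on the asymptotic linearization \eqref{eq:linear S}, which reduces the task to deriving the limit law of
\[
T_n = n^{-1/2}\sum_{i=1}^n \{1+k_{\theta_0,i}(W_i)\}H_i(W_i).
\]
I would first decompose each summand as $H_i(W_i) = \mu_H(W_i, X_i) + \varepsilon_i$, where $\varepsilon_i := H_i(W_i) - \mu_H(W_i, X_i)$ is, by construction, conditionally mean zero with variance $\sigma_H^2(W_i, X_i)$ given $(W_i, X_i)$ and, by the iid sampling and non-informative censoring, conditionally independent across $i$. This yields $T_n = A_n + B_n$ with
\[
A_n = n^{-1/2}\sum_{i=1}^n \{1+k_i(W_i)\}\mu_H(W_i, X_i), \qquad B_n = n^{-1/2}\sum_{i=1}^n \{1+k_i(W_i)\}\varepsilon_i.
\]

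For $A_n$, I would exploit the match-counting identity
\[
\sum_{i=1}^n k_i(W_i)\,\mu_H(W_i, X_i) = \sum_{j=1}^n \mu_H\!\bigl(1-W_j,\, X_{m(1-W_j,\, e(X_j))}\bigr),
\]
then combine the $O_p(n^{-1})$ matching-discrepancy bound for scalar propensity matching \citep{abadie2006large} with the Lipschitz-continuity of $\mu_H$ in Assumption \ref{asump:regua}(ii) to replace the imputed $\mu_H$ by $\mu_H(1-W_j, X_j)$ at $o_p(n^{1/2})$ cost. After pairing,
\[
A_n = n^{-1/2}\sum_{i=1}^n \bigl\{\mu_H(0, X_i) + \mu_H(1, X_i)\bigr\} + o_p(1),
\]
which is a classical iid sum of mean-zero random variables (since $E[\mu_H(\omega, X)] = E[H(\omega)] = 0$ from the martingale structure). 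The ordinary CLT then gives $A_n \to \mathcal{N}\bigl(0,\, E[\{\mu_H(0,X) + \mu_H(1,X)\}^2]\bigr)$.

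For $B_n$, I would condition on $\mathcal{G}_n = \sigma\{(W_i, X_i)\}_{i=1}^n$, under which the weights $1+k_i(W_i)$ are measurable and the $\varepsilon_i$ remain independent with mean zero. The conditional variance is
\[
V_n = n^{-1}\sum_{i=1}^n \{1+k_i(W_i)\}^2 \sigma_H^2(W_i, X_i) = n^{-1}\sum_{i=1}^n \{1 + 2k_i(W_i) + k_i(W_i)^2\} \sigma_H^2(W_i, X_i).
\]
Applying the SLLN to the leading term and Lemma \ref{lemma A.11} treatment-by-treatment to the $k_i$ and $k_i^2$ terms produces
\[
V_n \to \sum_{\omega=0}^1 E\!\left[\sigma_H^2(\omega, X)\!\left\{\frac{3}{2p(\omega\mid X)} - \frac{p(\omega\mid X)}{2}\right\}\right]
\]
in probability. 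I would then check the conditional Lindeberg condition using the $(2+\delta)$-moment bound of Assumption \ref{asump:regua}(iii) and the almost-sure bound on moments of $k_i(W_i)$ from \citet{abadie2006large}, which delivers conditional asymptotic normality of $B_n$.

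To combine, note that the linearized $A_n$ is $\mathcal{G}_n$-measurable while $B_n$ has residuals with mean zero given $\mathcal{G}_n$, so $A_n$ and $B_n$ are exactly uncorrelated and a conditional characteristic-function argument yields joint Gaussianity and hence $T_n \to \mathcal{N}(0, \overline{V}_S)$ with variance as in \eqref{eq:V_S}. The main obstacle is the variance calculation for $B_n$: the inflation factor $3/(2p(\omega\mid X)) - p(\omega\mid X)/2$ encodes the non-trivial second moment of the match counts under matching with replacement and rests on the delicate combinatorial arguments of Lemma \ref{lemma A.11}. The linearization for $A_n$ is comparatively routine once the Lipschitz-continuity of $\mu_H$ and the scalar-matching discrepancy bound are in hand, and the underlying linearization \eqref{eq:linear S} has been established prior to this theorem and need not be redone.
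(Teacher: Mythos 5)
Your proposal is correct and follows essentially the same route as the paper's proof: the identical decomposition of $\{1+k_i(W_i)\}H_i(W_i)$ into the conditional-mean part (handled via the scalar matching-discrepancy bound plus Lipschitz continuity of $\mu_H$) and the conditionally centered residual part (whose limiting variance comes from Lemma \ref{lemma A.11} applied to the $k_i$ and $k_i^2$ terms). The paper simply packages your two-stage conditional argument as a single martingale-array CLT over $2n$ summands, with a filtration that first reveals $\{(W_i,X_i)\}$ and then the $H_i(W_i)$, which is the formal counterpart of your conditional characteristic-function step.
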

\begin{proof}

We will show that $n^{-1/2}G_{n}(\beta_{0})$ can be expressed as
a sum of $n$ independent and identically distributed random vectors
plus a term that converges in probability to a zero vector. By some
algebra, we obtain
\begin{equation}
\sum_{i=1}^{n}\{1+k_{i}/M\}\left\{ W_{i}-\widehat{Q}(\beta_{0},t)\right\} \de\Lambda_{0}(t)\exp(\beta_{0}W_{i})Y_{i}(t)=0.\label{eq:(2.3)}
\end{equation}
Therefore, continuing with (\ref{eq:Sn-1}), we obtain
\begin{align}
& n^{-1/2}G_{n}(\beta_{0}) \nonumber \\& =  n^{-1/2}\sum_{i=1}^{n}\{1+k_{i}/M\}\int_{0}^{\tau}\left\{ W_{i}-\widehat{Q}(\beta_{0},t)\right\} \de M_{i}(t) \\
 & =  n^{-1/2}\sum_{i=1}^{n}\{1+k_{i}/M\}\int_{0}^{\tau}\left\{ W_{i}-\widehat{Q}(\beta_{0},t)\right\} \left\{ \de N_{i}(t)-\de\Lambda_{0}(t)\exp(\beta_{0}W_{i})Y_{i}(t)\right\} \label{eq:(2.4)}\\
 & =  n^{-1/2}\sum_{i=1}^{n}\{1+k_{i}/M\}\int_{0}^{\tau}\left\{ W_{i}-Q(\beta_{0},t)\right\} \left\{ \de N_{i}(t)-\de\Lambda_{0}(t) e^{\beta_{0}W_{i}}Y_{i}(t)\right\} \nonumber \\
 & \quad  +n^{-1/2}\sum_{i=1}^{n}\{1+k_{i}/M\}\int_{0}^{\tau}\left\{ Q(\beta_{0},t)-\widehat{Q}(\beta_{0},t)\right\} \left\{ \de N_{i}(t)-\de\Lambda_{0}(t) e^{\beta_{0}W_{i}}Y_{i}(t)\right\} ,\label{eq:(3.2)}
\end{align}
where (\ref{eq:(2.4)}) follows from (\ref{eq:(2.3)}). Under Assumptions S1 (viii), (ix) and (xi), there exists a function $Q(\beta_0,t)$ such that,
\[
\int_{0}^{\tau}\left\{ \widehat{Q}(\beta_{0},t)-Q(\beta_{0},t)\right\} \left[ n^{-1/2}\sum_{i=1}^{n}(1+k_{i}/M)
\left\{ \de N_{i}(t)-\de\Lambda_{0}(t) e^{\beta_{0}W_{i}}Y_{i}(t)\right\}
\right] \rightarrow0,
\]
in probability, as $n\rightarrow\infty$. Therefore, (\ref{eq:(3.2)}) becomes
\begin{equation}
n^{-1/2}G_{n}(\beta_{0})=n^{-1/2}\sum_{i=1}^{n}\{1+k_{i}/M\}H_{i}(W_{i})+o_{p}(1),\label{eq:s0}
\end{equation}
where $H_{i}(\omega)$ is defined in (\ref{eq:def H(w)}). Moreover,
\begin{align*}
\E\{H_{i}(W_{i})\} & =  \int_{0}^{\tau}\E\left[\left\{ W_{i}-Q(\beta_{0},t)\right\} \left\{ \de N_{i}(t)-\de\Lambda_{0}(t)\exp(\beta_{0}W_{i})Y_{i}(t)\right\} \right]\\
 & =  \int_{0}^{\tau}\E\left[e(X_{i})\left\{ 1-Q(\beta_{0},t)\right\} \de M^{(1)}(t)\right] \\
 & \; \quad -\int_{0}^{\tau}\E\left[\left\{ 1-e(X_{i})\right\} Q(\beta_{0},t)\de M^{(0)}(t)\right]\\
 & =  0,
\end{align*}
where the last line follows by the martingale property for the potential
outcome process. 

We write
\begin{align*}
& n^{-1/2}G_{n}(\beta_{0})\\ & =  n^{-1/2}\left\{ G_{n}(\beta_{0})-\sum_{\omega=0}^{1}\sum_{i=1}^{n}\mu_{H}(\omega,X_{i})\right\} +n^{-1/2}\sum_{\omega=0}^{1}\sum_{i=1}^{n}\mu_{H}(\omega,X_{i})\\
 & =  n^{-1/2}\left[\sum_{i=1}^{n}\{1+k_{i}/M\}H_{i}(W_{i})-\sum_{i=1}^{n}\sum_{\omega=0}^{1}\mu_{H}(\omega,X_{i})\right]+n^{-1/2}\sum_{\omega=0}^{1}\sum_{i=1}^{n}\mu_{H}(\omega,X_{i})\\
 & =  \sum_{\omega=0}^{1}n^{-1/2}\sum_{i=1}^{n}\Bigg[I(W_{i}=\omega)\{1+k_{i}/M\}\{H_{i}(\omega)-\mu_{H}(\omega,X_{i})\}\\
 &  \qquad \qquad \qquad \qquad +\{1-I(W_{i}=\omega)\}\{\mu_{H}(\omega,X_{m(\omega,X_{i})})-\mu_{H}(\omega,X_{i})\} +\mu_{H}(\omega,X_{i})\Bigg].
\end{align*}
Similar to (\ref{eq:T2}), we{} have $n^{-1/2}\sum_{i=1}^{n}\{1-I(W_{i}=\omega)\}\{\mu_{H}(\omega,X_{m(\omega,X_{i})})-\mu_{H}(\omega,X_{i})\}=O_{p}(n^{-1/2})=o_{p}(1)$,
for $\omega=0,1$. Therefore, we can write
\[
n^{-1/2}G_{n}(\beta_{0})=\sum_{l=1}^{2n}\xi_{n,l}+o_{p}(1),
\]
where 
\[
\xi_{n,l}=\begin{cases}
n^{-1/2}\left\{ \mu_{H}(0,X_{l})+\mu_{H}(1,X_{l})\right\} , & 1\leq l\leq n,\\
n^{-1/2}\left\{ 1+k_{l-n}\right\} \left\{ H_{l-n}(W_{l-n})-\mu_{H}(W_{l-n},X_{l-n})\right\} , & n+1\leq l\leq2n.
\end{cases}
\]
Consider the $\sigma$-fields 
\[
\mathcal{F}_{n,l}=\begin{cases}
\left\{ W_{1},\ldots,W_{l},X_{1},\ldots,X_{l}\right\} , & 1\leq l\leq n,\\
\left\{ W_{1},\ldots,W_{n},X_{1},\ldots,X_{n},H_{1}(W_{1}),\ldots,H_{l-n}(W_{l-n})\right\} , & n+1\leq l\leq2n.
\end{cases}
\]
Then, for each $n\geq1$, 
\[
\left\{ \sum_{j=1}^{l}\xi_{n,j},\mathcal{F}_{n,l},1\leq l\leq2n\right\} 
\]
is a martingale. Moreover, we evaluate 
\[
\sum_{l=1}^{n}E\left(\xi_{n,l}^{2}\mid\mathcal{F}_{n,l-1}\right)\rightarrow 
\E\left[\left\{ \mu_{H}(0,X)+\mu_{H}(1,X)\right\}^2 \right],
\]
and based on Lemma \ref{lemma A.11}
\begin{align*}
\sum_{l=n+1}^{2n}E\left(\xi_{n,l}^{2}|\mathcal{F}_{n,l-1}\right) & =  n^{-1}\sum_{i=1}^{n}\sum_{\omega=0}^{1}I(W_{i}=\omega)\left\{ 1+k_{i}/M\right\} ^{2}\sigma_{H}^{2}(\omega,X_{i})\\
 & \rightarrow  \sum_{\omega=0}^{1}\E\left[\sigma_{H}^{2}(\omega,X)\left\{ \frac{2M+1}{2M}\frac{1}{p(\omega\mid X)}-\frac{1}{2M}p(\omega\mid X)\right\} \right],
\end{align*}
 as $n\rightarrow\infty$. Apply the Central Limit Theorem for martingale
arrays, (\ref{eq:asymp S}) follows. 

\end{proof}
To establish the result in Theorem \ref{Thm1}, we replace $X$ by
$e(X)$ as the matching variable; therefore, (\ref{eq:asymp S}) holds
for
\begin{align}
V_{G}&=\sum_{\omega=0}^{1}\E\left[\sigma_{H}^{2}\{\omega,e(X)\}\left\{ \frac{2M+1}{2M}\frac{1}{p(\omega\mid X)}-\frac{1}{2M}p(\omega\mid X)\right\} \right] \nonumber \\ 
& \quad + E\left(\left[\mu_{H}\{0,e(X)\} + \mu_{H}\{1,e(X)\}\right]^2\right).
\label{eq:VSforE}
\end{align}

Combining (\ref{eq:hat beta}) and (\ref{eq:VSforE}), $n^{1/2}(\widehat{\beta}-\beta_{0})$
is asymptotically normal with mean zero and covariance matrix $V_{1}=A(\beta_{0})^{-1}V_{G}A(\beta_{0})^{-1}$.
This completes the proof for Theorem \ref{Thm1}.

\section{Proof for Theorem 2}
\begin{theorem}\label{Thm:E}Suppose that Assumptions \ref{asump:consistency}\textendash \ref{asp:positivity} and Assumption \ref{asump:regua}
hold. Suppose that $e(X)$ follows a logistic regression model $e(X^{\T}\theta)$
with the true parameter value $\theta_{0}$. Let $\widehat{\theta}$ be
the maximum likelihood estimator for $\theta$, and $\mathcal{I}_{\theta_{0}}$
be the Fisher information matrix. Then, based on matching on the estimated
propensity score $e(X^{\T}\widehat{\theta})$, 
\[
n^{-1/2}G_{n}(\beta_{0})\rightarrow\N\left(0,V_{G}-c^{\T}\mathcal{I}_{\theta_{0}}^{-1}c\right),
\]
where
\[
c=\E\left\{ \left[\frac{\cov\left\{ X,\mu_{H}(1,X)\mid e(X)\right\} }{e(X)}+\frac{\cov\left\{ X,\mu_{H}(0,X)\mid e(X)\right\} }{1-e(X)}\right]\dot{e}(X^{\T}\theta_{0})\right\} .
\]

\end{theorem}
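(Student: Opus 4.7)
The plan is to adapt the Andreou--Werker decomposition as applied by \citet{abadie2016matching} to the matching estimator of causal effects, substituting their continuous-outcome framework with the martingale-based setup underlying Theorem \ref{Thm:A}. The starting point is the representation (\ref{eq:s0}) established in the proof of Theorem \ref{Thm:A}:
\[
n^{-1/2}S_n(\beta_0) = n^{-1/2}\sum_{i=1}^n \{1 + k_{\widehat\theta,i}(W_i)\} H_i(W_i) + o_p(1),
\]
in which only the match counts $k_{\widehat\theta,i}$ depend on $\widehat\theta$; the quantities $H_i(W_i)$ are intrinsic population objects. Hence the effect of propensity score estimation propagates exclusively through the matching operator, and the task reduces to quantifying how that operator varies with $\theta$.

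First I would write $n^{-1/2} S_n(\beta_0) = T_n(\theta_0) + \{T_n(\widehat\theta) - T_n(\theta_0)\}$, where $T_n(\theta) = n^{-1/2}\sum_i \{1 + k_{\theta,i}(W_i)\} H_i(W_i)$ is the score computed by matching on $e(X^\T\theta)$. The leading term $T_n(\theta_0)$ is asymptotically $\N(0,V_S)$ by Theorem \ref{Thm:A}. To handle the non-smooth correction, I would invoke the device of \citet{andreou2012alternative}: although $T_n(\theta)$ is piecewise-constant in $\theta$ for fixed samples, the map $\theta \mapsto E\{T_n(\theta)\}$ is smooth under Assumption \ref{asump:regua}, and stochastic equicontinuity on an $n^{-1/2}$-neighborhood of $\theta_0$ yields
\[
T_n(\widehat\theta) - T_n(\theta_0) = -\,c^{\T}\, n^{1/2}(\widehat\theta - \theta_0) + o_p(1).
\]
Identifying the derivative as $-c^{\T}$ proceeds by computing the first-order effect of perturbing $\theta$ on the imputation discrepancy $\mu_H\{\omega, e(X_{m(\omega,i)}^{\T}\theta)\} - \mu_H\{\omega, e(X_i^{\T}\theta)\}$. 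Because perturbations in $\theta$ reshuffle matches along the level sets of $e(X^\T\theta)$, the surviving first-order term is precisely the conditional covariance of $X$ with $\mu_H(\omega,X)$ given $e(X)$, weighted by the inverse treatment probability and by $\dot e(X^\T \theta_0)$, producing the displayed expression for $c$.

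Second I would substitute the standard influence-function expansion of the logistic MLE,
\[
n^{1/2}(\widehat\theta - \theta_0) = \mathcal{I}_{\theta_0}^{-1} n^{-1/2}\sum_{i=1}^n X_i\{W_i - e(X_i^{\T}\theta_0)\} + o_p(1),
\]
to assemble $n^{-1/2} S_n(\beta_0)$ as a single asymptotically linear sum. A direct calculation, exploiting Assumption \ref{asp:NUC} and the martingale structure of $H_i$, shows that the cross-covariance between $T_n(\theta_0)$ and the MLE score vector equals $c$. The asymptotic variance therefore collapses to $V_S - 2\,c^{\T}\mathcal{I}_{\theta_0}^{-1}c + c^{\T}\mathcal{I}_{\theta_0}^{-1}c = V_S - c^{\T}\mathcal{I}_{\theta_0}^{-1}c$, and a martingale CLT analogous to the one used in Theorem \ref{Thm:A} finishes the proof.

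The main obstacle is the non-smooth linearization $T_n(\widehat\theta) - T_n(\theta_0) = -c^{\T}n^{1/2}(\widehat\theta-\theta_0) + o_p(1)$. Since $\theta \mapsto m\{\omega, e(X^{\T}\theta)\}$ is piecewise constant, no classical Taylor expansion is available; one must instead combine the smoothness of $E\{T_n(\theta)\}$ with empirical-process control, along the lines of \citet{andreou2012alternative}. Precise matching discrepancy bounds in the spirit of Lemma \ref{lemma A.11} are needed to justify the equicontinuity estimate uniformly on shrinking neighborhoods of $\theta_0$, and a careful conditioning argument separating variation along the propensity score from variation orthogonal to it is required to pin down the exact form of $c$.
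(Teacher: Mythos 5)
Your overall accounting is internally consistent: if the linearization $T_n(\widehat{\theta})-T_n(\theta_0)=-c^{\T}n^{1/2}(\widehat{\theta}-\theta_0)+o_p(1)$ held and the asymptotic cross-covariance between $T_n(\theta_0)$ and the logistic score were $c$, then the algebra $V_S+c^{\T}\mathcal{I}_{\theta_0}^{-1}c-2c^{\T}\mathcal{I}_{\theta_0}^{-1}c=V_S-c^{\T}\mathcal{I}_{\theta_0}^{-1}c$ would give the theorem, and your identification of the derivative of $E\{T_n(\theta)\}$ with $-c^{\T}$ is the right companion to that cross-covariance. The genuine gap is the linearization itself, which you flag as the main obstacle but do not resolve. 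Stochastic equicontinuity of the centered matching process over $n^{-1/2}$-neighborhoods of $\theta_0$ is precisely the condition that cannot be verified (and should not be expected to hold) for nearest-neighbor matching with replacement: perturbing $\theta$ by $O(n^{-1/2})$ moves the estimated propensity scores by $O(n^{-1/2})$, whereas the spacing between adjacent propensity-score values in the sample is $O(n^{-1})$, so the match assignments $m\{\omega,e(X_i^{\T}\theta)\}$ and the counts $k_{\theta,i}(\omega)$ are wholesale reshuffled across any such neighborhood and the increments of $T_n(\theta)-E\{T_n(\theta)\}$ are not negligible in any obvious sense. Compounding this, you cite \citet{andreou2012alternative} in support of the equicontinuity estimate, but the entire point of that paper is to provide a device that \emph{bypasses} equicontinuity and differentiability of the statistic in the nuisance parameter.

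The paper's proof accordingly does not linearize in $\widehat{\theta}$ at all. It works under the local alternative $P^{\theta_n}$ with $\theta_n=\theta_0+n^{-1/2}h$, writes $n^{-1/2}S_n(\beta_0)=D_n(\theta_n)+o_P(1)$, and establishes the joint asymptotic normality of $\bigl(D_n(\theta_n),\,n^{1/2}(\widehat{\theta}_n-\theta_n),\,\Lambda_n(\theta_0\mid\theta_n)\bigr)$ by a three-block martingale-array CLT (the blocks carrying, respectively, the $\mu_H\{\omega,e(X^{\T}\theta_n)\}$ terms, the deviations of $X$ and $\mu_H(\omega,X)$ from their conditional means given $e(X^{\T}\theta_n)$ --- which is where $c$ arises as a covariance, not a derivative --- and the residuals $H_i(\omega)-\mu_H(\omega,X_i)$). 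Le Cam's third lemma then delivers $\N(0,V_S-c^{\T}\mathcal{I}_{\theta_0}^{-1}c)$ directly. The only distributional facts needed are CLTs at the single drifting value $\theta_n$; no uniform-in-$\theta$ control of the matching process is required. To repair your argument you would either have to prove the equicontinuity claim for the matching operator (a substantial new result, not supplied by Lemma \ref{lemma A.11}, which concerns only limits at a fixed $\theta$) or adopt the local-experiment route of the paper.
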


\begin{proof}Let $Z_{i}=\left\{ X_{i},W_{i},H_{i}(W_{i})\right\} $,
and let $L\left(\theta\mid Z_{1},\ldots,Z_{n}\right)$ be the log
likelihood function of $\theta$, i.e., 
\begin{align*}
L(\theta\mid Z_{1},\ldots,Z_{n})&=\log\left[\prod_{i=1}^{n}e(X_{i}^{\T}\theta)^{W_{i}}\{1-e(X_{i}^{\T}\theta)\}^{1-W_{i}}\right] \\&=\sum_{i=1}^{n}\left[W_{i}\log e(X_{i}^{\T}\theta)+(1-W_{i})\log\{1-e(X_{i}^{\T}\theta)\}\right].
\end{align*}
Following \cite{abadie2016matching} we use the local experiment
argument. Let $\theta_{n}=\theta_{0}+n^{-1/2}h$, and $P^{\theta_{n}}$
be the data distribution under $e(X^{\T}\theta_{n})$. Also, we define
\begin{equation}
\Lambda_{n}(\theta_{0}\mid\theta_{n})=L(\theta_{0}\mid Z_{1},\ldots,Z_{n})-L(\theta_{n}\mid Z_{1},\ldots,Z_{n}).\label{eq:3}
\end{equation}
Under $P^{\theta_{n}},$ we can express $n^{-1/2}G_{n}(\beta_{0})=D_{n}(\theta_{n})+o_{P}(1),$
where
\begin{align*}
D_{n}(\theta_{n})=n^{-1/2}\sum_{i=1}^{n}\sum_{\omega=0}^{1}\bigg[&I(W_{i}=\omega)\left\{ 1+k_{i}/M\right\} \left[H_{i}(\omega)-\mu_{H}\left\{ \omega,e(X_{i}^{\T}\theta_{n})\right\} \right]\\& \;+\mu_{H}\left\{ \omega,e(X_{i}^{\T}\theta_{n})\right\} \bigg].
\end{align*}

We shall show that under $P^{\theta_{n}}:$
\begin{equation}
\left(\begin{array}{c}
D_{n}(\theta_{n})\\
n^{1/2}(\widehat{\theta}_{n}-\theta_{n})\\
\Lambda_{n}(\theta_{0}\mid\theta_{n})
\end{array}\right)\rightarrow\N\left\{ \left(\begin{array}{c}
0\\
0\\
-h^{\T}\mathcal{I}_{\theta_{0}}h/2
\end{array}\right),\left(\begin{array}{ccc}
V_{G} & c^{\T}\mathcal{I}_{\theta_{0}}^{-1} & -c^{\T}h\\
\mathcal{I}_{\theta_{0}}^{-1}c & \mathcal{I}_{\theta_{0}}^{-1} & h\\
-h^{\T}c & -h^{\T} & h^{\T}\mathcal{I}_{\theta_{0}}h
\end{array}\right)\right\} ,\label{eq:lecam}
\end{equation}
in distribution, as $n\rightarrow\infty$. Then, by Le Cam's third
lemma \citep{le1990asymptotics}, $n^{-1/2}G_{n}(\beta_{0})\rightarrow\N\left(0,V_{G}-c^{\T}\mathcal{I}_{\theta_{0}}^{-1}c\right)$
in distribution, as $n\rightarrow\infty$. 

To show (\ref{eq:lecam}), denote \textcolor{black}{
\[
\Delta_{n}(\theta){\color{black}=}n^{-1/2}\frac{\partial}{\partial\theta}L(\theta\mid Z_{1},\ldots,Z_{n})=n^{-1/2}\sum_{i=1}^{n}X_{i}\dot{e}(X_{i}^{\T}\theta)\frac{W_{i}-e(X_{i}^{\T}\theta)}{e(X_{i}^{\T}\theta)\{1-e(X_{i}^{\T}\theta)\}}.
\]
Then, under $P^{\theta_{n}}$:}
\[
\Delta_{n}(\theta_{n})\rightarrow\N(0,\mathcal{I}_{\theta_{0}}),\ \mathcal{I}_{\theta_{0}}=\E\left[\frac{\dot{e}(X^{\T}\theta)^{2}XX^{\T}}{e(X^{\T}\theta)\{1-e(X^{\T}\theta)\}}\right],
\]
in distribution, as $n\rightarrow\infty$.\textcolor{black}{We also
note that under $P^{\theta_{n}}$:}
\begin{eqnarray}
n^{1/2}(\widehat{\theta}_{n}-\theta_{n}) & = & \mathcal{I}_{\theta_{0}}^{-1}\Delta_{n}(\theta_{n})+o_{P}(1),\label{eq:4}\\
\Lambda_{n}(\theta_{0}\mid\theta_{n}) & = & -h^{\T}\Delta_{n}(\theta_{n})-\frac{1}{2}h^{\T}\mathcal{I}_{\theta_{0}}h+o_{P}(1).\nonumber 
\end{eqnarray}
To show (\ref{eq:lecam}), it suffices to show that
\begin{equation}
\left(\begin{array}{c}
D_{n}(\theta_{n})\\
\Delta_{n}(\theta_{n})
\end{array}\right)\rightarrow\N\left\{ \left(\begin{array}{c}
0\\
0
\end{array}\right),\left(\begin{array}{cc}
V_{G} & c^{\T}\\
c & \mathcal{I}_{\theta_{0}}
\end{array}\right)\right\} ,\label{eq:5}
\end{equation}
in distribution, as $n\rightarrow\infty$. Toward this end, we consider
the linear combination $L_{n}=z_{1}D_{n}(\theta_{n})+z_{2}^{\T}\Delta_{n}(\theta_{n}),$
for any $z_{1}$ and $z_{2}$. We write $L_{n}=\sum_{l=1}^{3n}\xi_{n,l}$,
where \textcolor{black}{{} 
\begin{align*}
\xi_{n,l}=\begin{cases}
&z_{1}n^{-1/2}\sum_{\omega=0}^{1}\mu_{H}\left\{ \omega,e(X_{l}^{\T}\theta_{n})\right\} \\&+z_{2}^{\T}{\color{black}n^{-1/2}\E\left\{ X_{l}\mid e(X_{l}^{\T}\theta_{n})\right\} \dot{e}(X_{i}^{\T}\theta_{n})\frac{W_{i}-e(X_{i}^{\T}\theta_{n})}{e(X_{i}^{\T}\theta_{n})\{1-e(X_{i}^{\T}\theta_{n})\}},} 
\\&\qquad \qquad \qquad \qquad \qquad
\qquad \qquad \qquad \qquad \qquad 
\qquad \qquad \quad
\text{for \;} 1\leq l\leq n;\\
&z_{1}n^{-1/2}\sum_{\omega=0}^{1}I(W_{l-n}=\omega)\left\{ 1+k_{l-n}/M\right\} \left[\mu_{H}(\omega,X_{l-n})-\mu_{H}\left\{ \omega,e(X_{l-n}^{\T}\theta_{n})\right\} \right]\\
&+z_{2}^{\T}n^{-1/2}\left(X_{l-n}-\E\left\{ X_{l-n}\mid e(X_{l-n}^{\T}\theta_{n})\right\} \right)\dot{e}(X_{l-n}^{\T}\theta_{n})\frac{W_{l-n}-e(X_{l-n}^{\T}\theta_{n})}{e(X_{l-n}^{\T}\theta_{n})\{1-e(X_{l-n}^{\T}\theta_{n})\}},\\  &\qquad \qquad \qquad \qquad \qquad 
\qquad \qquad \qquad \qquad \qquad 
\qquad \qquad \quad
\text{for \;} n+1\leq l\leq2n;\\
&z_{1}n^{-1/2}\sum_{\omega=0}^{1}I(W_{l-2n}=\omega)\left\{ 1+k_{l-2n}/M\right\} \left\{ H_{l-2n}(W_{l-2n})-\mu_{H}(\omega,X_{l-2n})\right\} , \\&\qquad \qquad \qquad \qquad \qquad
\qquad \qquad \qquad \qquad \qquad 
\qquad \qquad \quad
\text{for \;} 2n+1\leq l\leq3n.
\end{cases}
\end{align*}
}Consider $\sigma-$fields\textcolor{black}{
\[
\mathcal{F}_{n,l}=\begin{cases}
\left\{ W_{1},\ldots,W_{l},X_{1}^{\T}\theta_{n},\ldots,X_{l}^{\T}\theta_{n}\right\} , & 1\leq l\leq n,\\
\left\{ W_{1},\ldots,W_{n},X_{1}^{\T}\theta_{n},\ldots,X_{n}^{\T}\theta_{n},X_{1},\ldots,X_{l-n}\right\} , & n+1\leq l\leq2n,\\
\left\{ W_{1},\ldots,W_{n},X_{1},\ldots,X_{n},H_{1}(W_{1}),\ldots,H_{l-2n}(W_{l-2n})\right\} , & 2n+1\leq l\leq3n.
\end{cases}
\]
}Then, $\left\{ \sum_{j=1}^{l}\xi_{n,j},F_{n,i},1\leq l\leq3n\right\} $
is a martingale for each $n\geq1$. Under $P^{\theta_{n}}$, 
\[
L_{n}\rightarrow\N(0,\sigma_{L,1}^{2}+\sigma_{L,2}^{2}+\sigma_{L,3}^{2}),
\]
in distribution, as $n\rightarrow\infty,$ where
\begin{align*}
\sigma_{L,1}^{2}&=z_{1}^{2}E\left(\left[\mu_{H}\left\{ 0,e(X)\right\} +\mu_{H}\left\{ 1,e(X)\right\}\right]^2\right)\\
& \; \quad +z_{2}^{\T}\E\left[\E\left\{ X\mid e(X)\right\} \mbox{E}\left\{ X^{\T}\mid e(X)\right\} \frac{\dot{e}(X^{\T}\theta_{0})^{2}}{e(X)\{1-e(X)\}}\right]z_{2},
\end{align*}
\begin{align*}
& \sigma_{L,2}^{2} \\  
& =  z_{1}^{2}\sum_{\omega=0}^{1}\E\left[\var\left\{ \mu_{H}(\omega,X)\mid e(X)\right\} \left\{ \frac{2M+1}{2M}\frac{1}{p(\omega\mid X^{\T}\theta_{0})}-\frac{1}{2M}p(\omega\mid X^{\T}\theta_{0})\right\} \right]
\\
 & \quad \; + \; 2z_{2}^{\T}\E\left[\frac{\cov\left\{ X,\mu_{H}(1,X)\mid e(X)\right\} \dot{e}(X^{\T}\theta_{0})}{e(X)}-\frac{\cov\left\{ X,\mu_{H}(0,X)\mid e(X)\right\} \dot{e}(X^{\T}\theta_{0})}{1-e(X)}\right]z_{1}\\
 & \quad \; + \; z_{2}^{\T}\E\left[\var\left\{ X\mid e(X)\right\} \frac{\dot{e}(X^{\T}\theta_{0})^{2}}{e(X)\{1-e(X)\}}\right]z_{2},
\end{align*}
and
\[
\sigma_{L,3}^{2}=z_{1}^{2}\sum_{\omega=0}^{1} \E\left[\sigma_{H}^{2}(\omega,X)\mid e(X)\right]\left\{ \frac{2M+1}{2M}\frac{1}{p(\omega\mid X)}+\frac{1}{2M}p(\omega\mid X)\right\} .
\]
Then, $\sigma_{L,1}^{2}+\sigma_{L,2}^{2}+\sigma_{L,3}^{2}=z_{1}^{2}V_{G}+z_{2}^{\T}\mathcal{I}_{\theta_{0}}^{-1}z_{2}+2z_{2}^{\T}cz_{1}.$
Thus, under $P^{\theta_{n}}$, (\ref{eq:5}) follows. This completes
the proof for Theorem \ref{Thm2}. 

\end{proof}

\section{Simulation}

\subsection{An data-generating algorithm}

We describe the algorithm for generating $T^{(0)}$ and $T^{(1)}$ that are congenial
with Model (\ref{(5.1)}) as follows.

\begin{algorithm}[hbt!] 
%
\vspace{6pt}
\begin{tabbing}
\qquad \enspace {Step$\ 1.$} Generate $T^{(0)}$ from $S^{(0)}(t)=\exp(-\lambda_{0}t)$,
where $\lambda_{0}=6$ for $\beta_0=0$ and 0.5, and \\ 
\qquad \qquad $\lambda_{0}=15$ for $\beta_0=-0.5$. \\
\qquad \enspace {Step$\ 2.$} Generate $u$ from Unif$[0,1]$, solve \\
\qquad \qquad \qquad $\left\{ \prod_{k=1}^{6}\left(1-\frac{\eta_{k}t}{\lambda_{k}}\right)\right\} \exp\left[\left\{ X^{\T}\eta-\lambda_{0}\exp(\beta_{0})\right\} t\right]-1+u=0$ \\
\qquad \qquad for $t$, where $\eta_{1}=\cdots=\eta_{6}=-2$ and $\eta=\left(\eta_1,\cdots,\eta_6\right)^{\T}$. Let $T^{(1)}$ be the solution $t$. 
\end{tabbing}
\vspace{6pt}
 \caption{for generating $T^{(0)}$ and $T^{(1)}$ that are congenial with Model (\ref{(5.1)}) \label{alg:1}}

\end{algorithm}

By Algorithm \ref{alg:1}, $T^{(1)}$ given $X$ follows 
\[
S_{T\mid W,X}(t\mid W=1,X)=\left\{ \prod_{k=1}^{6}\left(1-\frac{\eta_{k}t}{\lambda_{k}}\right)\right\} \exp\left[\left\{ X^{\T}\mathbf{\eta }-\lambda_{0}\exp(\beta_{0})\right\} t\right].
\]
Under our parameter specification, we have (i) $S_{T\mid W,X}(t=0\mid W=1,X=x)=1$,
(ii) $S_{T\mid W,X}(t=\tau\mid W=1,X=x)=0$, because of $X^{\T}\mathbf{\eta }-\lambda_{0}\exp(\beta_{0})\leq 0$
, and (iii) $\de S_{T\mid W,X}(t\mid W=1,X=x)/\de t\leq0$, because
of $\sum_{i=1}^6\lambda_{i}^{-1}\leq\min(\lambda_{0},\lambda_{0}e^{\beta_{0}})$.

The marginal distribution of $T^{(1)}$ is 
\begin{eqnarray*}
S^{(1)}(t) & = & \int S_{T\mid W,X}(t\mid W=1,X=x)\de F(x)\\
 & = & \int\left\{ \prod_{k=1}^{6}\left(1-\frac{\eta_{k}t}{\lambda_{k}}\right)\right\} \exp\left[\left\{ X^{\T}\mathbf{\eta }-\lambda_{0}\exp(\beta_{0})\right\} t\right]\de F(x)\\
 & = & \left\{ S^{(0)}(t)\right\} ^{\exp(\beta_{0})}.
\end{eqnarray*}
Therefore, the marginal distributions of $\{T^{(0)},T^{(1)}\}$ satisfy
$S^{(1)}(t)=\left\{ S^{(0)}(t)\right\} ^{\exp(\beta_{0})}$ . 

\subsection{Illustration of propensity score distributions with weak, medium,
and strong covariate overlap}

This section demonstrates the mentioned simulation settings of weak, medium, strong covariate
overlap between
the two treatment groups.
Figure \ref{fig:ps-separations} shows density curves for the true propensity scores of
the two treatment groups are presented 
in dashed lines of $W=0$ and in solid lines of $W=1$.
\begin{figure}[ht]
\begin{centering}
\includegraphics[width=\textwidth]{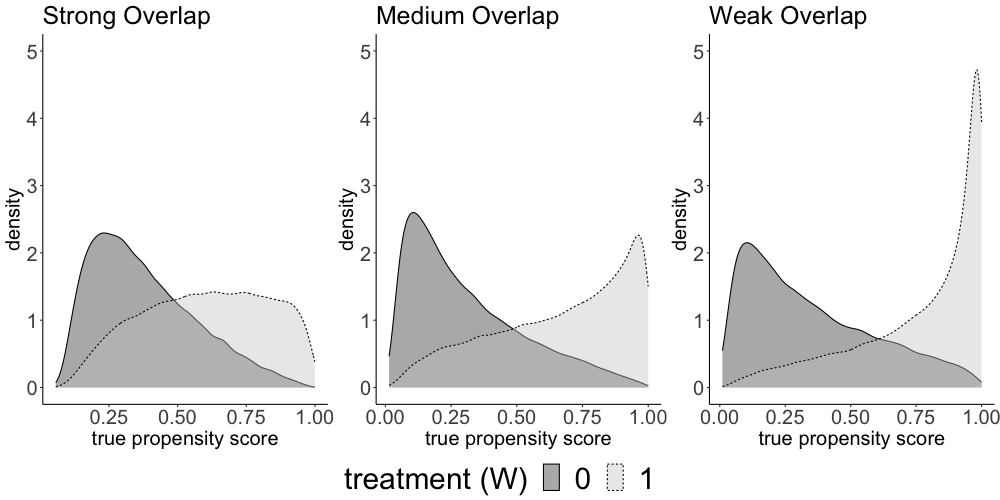}
\par\end{centering}
\caption{\label{fig:ps-separations} Plots of the true propensity score distribution
between treatment groups $W=0$ and $W=1$ for strong, medium and weak covariate overlap. }

\end{figure}
\subsection{Additional simulation results}

We provide additional simulation results in Table \ref{tab:beta0-N1000}-\ref{tab:ps=0.5}. 

\begin{table}[hbt!]
\caption{\label{tab:beta0-N1000}Simulation results: bias ($\times10^{2}$)
and variance ($\times10^{3}$) of the point estimator of $\beta_{0}$,
coverage ($\%$) of $95\%$ confidence intervals based on $1,000$
Monte Carlo samples with true $\beta_{0}=0$ when propensity score model is misspecified.} 
\resizebox{\textwidth}{!}{%
\begin{tabular}{llllllllllllll} 
\hline
\multicolumn{2}{l}{}                                            & Bias & Var   & VE   & CR     & Bias & Var   & VE    & CR     & Bias & Var    & VE    & CR      \\
\multicolumn{2}{l}{Level of covariate overlap}                           & \multicolumn{4}{l}{Strong}     & \multicolumn{4}{l}{Medium}    & \multicolumn{4}{l}{Weak}      \\ 
\hline
\multicolumn{14}{l}{Scenario (i) Correct specification of the propensity score model}                                                                            \\
$\widehat{\beta}_{\nai}$                       &                & 54.6 & 4.9   & 5.1  & 0.0\%  & 70.6 & 5.2   & 5.3   & 0.0\%  & 80.9 & 5.2    & 5.7   & 0.0\%   \\
$\widehat{\beta}_{\ipw}$                       &                & 0.3  & 6.1   & 8.3  & 97.8\% & 1.0  & 12.5  & 14.6  & 95.6\% & 4.1  & 22.2   & 18.9  & 91.2\%  \\
$\widehat{\beta}_{\aipw}$                      &                & 0.1  & 5.3   & 6.0  & 95.9\% & 0.2  & 9.1   & 12.0  & 96.1\% & 2.3  & 31.3   & 40.6  & 96.5\%  \\
$\widehat{\beta}_{\xm}$                        &                & 6.9  & 5.9   & 8.8  & 92.7\% & 11.4 & 8.2   & 11.8  & 84.0\% & 17.7 & 8.2    & 11.7  & 64.4\%  \\
$\widehat{\beta}_{\psm.0}$                     &                & 0.5  & 6.6   & 10.9 & 98.5\% & 1.6  & 10.6  & 17.9  & 97.4\% & 3.7  & 16.1   & 25.4  & 96.1\%  \\
\multirow{5}{*}{\begin{tabular}[c]{@{}l@{}}$\widehat{\beta}_{\psm}$\\ (M=1)\end{tabular}} & software       & 0.7  & 6.3   & 11.1 & 98.9\% & 1.3  & 10.5  & 18.3  & 97.9\% & 3.9  & 17.2   & 26.0  & 95.8\%  \\
                                               & asymp          & 0.7  & 6.3   & 6.5  & 95.1\% & 1.3  & 10.5  & 10.6  & 94.2\% & 3.9  & 17.2   & 16.4  & 91.3\%  \\
                                               & naiveboot      & 0.7  & 6.3   & 9.8  & 98.8\% & 1.3  & 10.5  & 17.0  & 98.7\% & 3.9  & 17.2   & 25.1  & 96.3\%  \\
                                               & double-rsp(5)  & 0.7  & 6.3   & 8.2  & 97.7\% & 1.3  & 10.5  & 13.6  & 97.5\% & 3.9  & 17.2   & 24.1  & 95.4\%  \\
                                               & double-rsp(10) & 0.7  & 6.3   & 8.9  & 98.2\% & 1.3  & 10.5  & 14.7  & 97.4\% & 3.9  & 17.2   & 22.2  & 94.5\%  \\
\multirow{5}{*}{\begin{tabular}[c]{@{}l@{}}$\widehat{\beta}_{\psm}$\\ (M=5)\end{tabular}}  & software       & 1.5  & 5.0   & 8.1  & 99.0\% & 3.2  & 8.0   & 12.3  & 96.6\% & 5.1  & 10.1   & 16.3  & 97.0\%  \\
                                               & asymp          & 1.5  & 5.0   & 5.0  & 94.2\% & 3.2  & 8.0   & 7.3   & 91.9\% & 5.1  & 10.1   & 9.2   & 90.6\%  \\
                                               & naiveboot      & 1.5  & 5.0   & 7.6  & 99.0\% & 3.2  & 8.0   & 12.2  & 96.8\% & 5.1  & 10.1   & 16.6  & 97.5\%  \\
                                               & double-rsp(5)  & 1.5  & 5.0   & 6.1  & 96.1\% & 3.2  & 8.0   & 9.5   & 95.2\% & 5.1  & 10.1   & 13.9  & 95.8\%  \\
                                               & double-rsp(10) & 1.5  & 5.0   & 6.5  & 97.1\% & 3.2  & 8.0   & 10.0  & 96.0\% & 5.1  & 10.1   & 14.4  & 96.2\%  \\ 
\hline
\multicolumn{14}{l}{Scenario (ii) Misspecification of the propensity score model}                                                                                \\
$\widehat{\beta}_{\nai}$                       &                & 54.6 & 4.9   & 5.1  & 0.0\%  & 70.4 & 5.0   & 5.3   & 0.0\%  & 80.8 & 5.2    & 5.7   & 0.0\%   \\
$\widehat{\beta}_{\ipw}$                       &                & 4.5  & 19.4  & 12.4 & 84.6\% & 3.8  & 42.2  & 19.6  & 79.0\% & 3.4  & 68.5   & 26.6  & 74.7\%  \\
$\widehat{\beta}_{\aipw}$                      &                & 17.2 & 121.5 & 72.8 & 73.1\% & 30.1 & 614.2 & 308.9 & 79.7\% & 30.0 & 1054.2 & 603.7 & 89.4\%  \\
$\widehat{\beta}_{\xm}$                        &                & 6.9  & 5.9   & 8.8  & 92.7\% & 11.4 & 8.3   & 11.8  & 84.1\% & 17.6 & 8.1    & 11.7  & 64.8\%  \\
$\widehat{\beta}_{\psm.0}$                     &                & 0.5  & 6.6   & 10.9 & 98.5\% & 1.3  & 10.7  & 18.0  & 97.4\% & 3.3  & 15.6   & 25.4  & 96.5\%  \\
\multirow{5}{*}{\begin{tabular}[c]{@{}l@{}}$\widehat{\beta}_{\psm}$\\ (M=1)\end{tabular}}  & software       & 5.4  & 6.7   & 10.6 & 95.3\% & 7.7  & 10.4  & 16.8  & 93.7\% & 10.9 & 15.1   & 22.8  & 89.6\%  \\
                                               & asymp          & 5.4  & 6.7   & 6.5  & 88.3\% & 7.7  & 10.4  & 9.4   & 86.1\% & 10.9 & 15.1   & 11.0  & 77.4\%  \\
                                               & naiveboot      & 5.4  & 6.7   & 9.5  & 94.6\% & 7.7  & 10.4  & 15.8  & 93.9\% & 10.9 & 15.1   & 21.6  & 90.9\%  \\
                                               & double-rsp(5)  & 5.4  & 6.7   & 9.8  & 95.5\% & 7.7  & 10.4  & 16.1  & 93.4\% & 10.9 & 15.1   & 17.7  & 88.2\%  \\
                                               & double-rsp(10) & 5.4  & 6.7   & 11.0 & 95.6\% & 7.7  & 10.4  & 21.2  & 94.5\% & 10.9 & 15.1   & 19.8  & 89.6\%  \\
\multirow{5}{*}{\begin{tabular}[c]{@{}l@{}}$\widehat{\beta}_{\psm}$\\ (M=5)\end{tabular}}  & software       & 6.0  & 5.1   & 7.9  & 93.9\% & 9.5  & 7.9   & 11.5  & 88.5\% & 13.1 & 9.9    & 14.7  & 84.9\%  \\
                                               & asymp          & 6.0  & 5.1   & 4.8  & 85.0\% & 9.5  & 7.9   & 6.6   & 76.1\% & 13.1 & 9.9    & 5.1   & 64.3\%  \\
                                               & naiveboot      & 6.0  & 5.1   & 7.5  & 93.6\% & 9.5  & 7.9   & 11.6  & 88.9\% & 13.1 & 9.9    & 15.0  & 86.4\%  \\
                                               & double-rsp(5)  & 6.0  & 5.1   & 7.3  & 93.6\% & 9.5  & 7.9   & 12.1  & 89.1\% & 13.1 & 9.9    & 11.8  & 79.1\%  \\
                                               & double-rsp(10) & 6.0  & 5.1   & 8.2  & 94.2\% & 9.5  & 7.9   & 14.4  & 91.4\% & 13.1 & 9.9    & 13.9  & 82.5\%  \\
\hline
\end{tabular}}
Note: "Var" is the variance of point estimates of $\beta_{0}$ across simulated datasets;  "VE" is the average variance estimation for the point estimators over simulations, thus VE minus Var reflects the bias in estimated variance;  "CR" is the empirical coverage rate of $95\%$ confidence intervals. Five types of variance estimates for $\widehat{\beta}_{\psm}$ were compared: "software", output from the standard software; "asymp", the proposed asymptotic variance estimation; "naiveboot", the naive nonparametric bootstrap;
"double-rsp(5)", the proposed double-resampling method with five quantile strata and "double-rsp(10)", the proposed double-resampling method with ten quantile strata.
\end{table}

\begin{table}[hbt!]
\caption{\label{tab:beta-0.5-N1000}Simulation results: bias ($\times10^{2}$)
and variance ($\times10^{3}$) of the point estimator of $\beta_{0}$,
coverage ($\%$) of $95\%$ confidence intervals based on $1,000$
Monte Carlo samples with true $\beta_{0}=0.5$}
\resizebox{\textwidth}{!}{%
\begin{tabular}{llllllllllllll} 
\hline
\multicolumn{2}{l}{}                                           & Bias & Var   & VE   & CR     & Bias & Var   & VE    & CR     & Bias & Var   & VE    & CR      \\
\multicolumn{2}{l}{Level of covariate overlap}                          & \multicolumn{4}{l}{Strong}     & \multicolumn{4}{l}{Medium}    & \multicolumn{4}{l}{Weak}     \\ 
\hline
\multicolumn{14}{l}{Scenario (i) Correct specification of the propensity score model}                                                                          \\
$\widehat{\beta}_{\nai}$                       &                & 46.4 & 5.5   & 5.6  & 0.0\%  & 59.7 & 5.8   & 5.8   & 0.0\%  & 59.7 & 6.1   & 6.5   & 0.0\%   \\
$\widehat{\beta}_{\ipw}$                       &                & -0.5 & 7.1   & 8.7  & 97.0\% & 0.1  & 14.8  & 15.0  & 94.5\% & 0.1  & 28.4  & 21.5  & 89.8\%  \\
$\widehat{\beta}_{\aipw}$                        &                & -0.7 & 6.2   & 7.4  & 95.2\% & -0.6 & 11.6  & 17.3  & 94.4\% & -0.6 & 35.7  & 42.6  & 97.2\%  \\
$\widehat{\beta}_{\xm}$                        &                & 6.3  & 7.7   & 8.9  & 90.3\% & 10.0 & 10.5  & 11.7  & 83.8\% & 16.8 & 10.4  & 12.4  & 66.0\%  \\
$\widehat{\beta}_{\psm.0}$                     &                & -0.2 & 8.6   & 11.3 & 96.9\% & 0.5  & 14.9  & 18.6  & 95.8\% & 0.5  & 23.5  & 29.9  & 94.0\%  \\
\multirow{5}{*}{\begin{tabular}[c]{@{}l@{}}$\widehat{\beta}_{\psm}$\\ (M=1)\end{tabular}} & software       & -0.1 & 8.3   & 11.5 & 97.0\% & 0.2  & 14.6  & 18.8  & 96.3\% & 0.2  & 25.3  & 30.6  & 93.2\%  \\
                                              & asymp          & -0.1 & 8.3   & 8.4  & 94.5\% & 0.2  & 14.6  & 14.5  & 92.9\% & 0.2  & 25.3  & 23.1  & 90.6\%  \\
                                              & naiveboot      & -0.1 & 8.3   & 11.1 & 97.1\% & 0.2  & 14.6  & 18.6  & 96.8\% & 0.2  & 25.3  & 31.7  & 94.6\%  \\
                                              & double-rsp(5)  & -0.1 & 8.3   & 9.6  & 95.9\% & 0.2  & 14.6  & 16.0  & 95.5\% & 0.2  & 25.3  & 29.9  & 95.0\%  \\
                                              & double-rsp(10) & -0.1 & 8.3   & 10.4 & 96.5\% & 0.2  & 14.6  & 16.9  & 96.1\% & 0.2  & 25.3  & 27.5  & 93.3\%  \\
\multirow{5}{*}{\begin{tabular}[c]{@{}l@{}}$\widehat{\beta}_{\psm}$\\ (M=5)\end{tabular}}& software       & 0.4  & 6.5   & 8.5  & 97.3\% & 1.6  & 10.6  & 12.9  & 95.5\% & 1.6  & 13.7  & 18.9  & 96.5\%  \\
                                              & asymp          & 0.4  & 6.5   & 6.3  & 93.9\% & 1.6  & 10.6  & 9.4   & 92.6\% & 1.6  & 13.7  & 12.5  & 91.9\%  \\
                                              & naiveboot      & 0.4  & 6.5   & 8.9  & 97.9\% & 1.6  & 10.6  & 13.8  & 96.8\% & 1.6  & 13.7  & 20.9  & 97.4\%  \\
                                              & double-rsp(5)  & 0.4  & 6.5   & 7.0  & 95.9\% & 1.6  & 10.6  & 10.9  & 94.8\% & 1.6  & 13.7  & 16.8  & 96.2\%  \\
                                              & double-rsp(10) & 0.4  & 6.5   & 7.5  & 96.5\% & 1.6  & 10.6  & 11.4  & 95.0\% & 1.6  & 13.7  & 17.5  & 96.5\%  \\ 
\hline
\multicolumn{14}{l}{Scenario (ii) Misspecification of the propensity score model}                                                                              \\
$\widehat{\beta}_{\nai}$                       &                & 46.4 & 5.5   & 5.6  & 0.0\%  & 59.7 & 5.6   & 5.8   & 0.0\%  & 72.0 & 6.0   & 6.5   & 0.0\%   \\
$\widehat{\beta}_{\ipw}$                       &                & -0.8 & 27.3  & 15.7 & 91.5\% & -3.5 & 61.3  & 25.5  & 84.7\% & -1.9 & 90.4  & 35.2  & 79.4\%  \\
$\widehat{\beta}_{\aipw}$                        &                & 11.6 & 111.1 & 73.3 & 90.4\% & 20.1 & 607.2 & 308.4 & 93.7\% & 10.5 & 778.5 & 538.8 & 94.4\%  \\
$\widehat{\beta}_{\xm}$                        &                & 6.8  & 8.1   & 9.7  & 90.5\% & 10.4 & 11.7  & 13.4  & 82.9\% & 18.1 & 13.5  & 15.1  & 65.8\%  \\
$\widehat{\beta}_{\psm.0}$                     &                & -0.3 & 8.8   & 11.2 & 96.8\% & 0.3  & 14.4  & 18.6  & 96.3\% & 2.2  & 23.2  & 30.1  & 94.5\%  \\
\multirow{5}{*}{\begin{tabular}[c]{@{}l@{}}$\widehat{\beta}_{\psm}$\\ (M=1)\end{tabular}} & software       & 4.3  & 8.7   & 10.9 & 93.9\% & 6.3  & 13.4  & 17.1  & 92.3\% & 9.7  & 20.7  & 26.2  & 89.2\%  \\
                                              & asymp          & 4.3  & 8.7   & 8.3  & 90.9\% & 6.3  & 13.4  & 12.3  & 88.2\% & 9.7  & 20.7  & 14.5  & 81.1\%  \\
                                              & naiveboot      & 4.3  & 8.7   & 10.7 & 93.9\% & 6.3  & 13.4  & 17.0  & 92.7\% & 9.7  & 20.7  & 26.9  & 91.6\%  \\
                                              & double-rsp(5)  & 4.3  & 8.7   & 11.0 & 94.5\% & 6.3  & 13.4  & 17.7  & 93.9\% & 9.7  & 20.7  & 21.0  & 88.9\%  \\
                                              & double-rsp(10) & 4.3  & 8.7   & 12.1 & 95.2\% & 6.3  & 13.4  & 22.0  & 94.7\% & 9.7  & 20.7  & 23.1  & 90.1\%  \\
\multirow{5}{*}{\begin{tabular}[c]{@{}l@{}}$\widehat{\beta}_{\psm}$\\ (M=5)\end{tabular}}& software       & 4.4  & 6.5   & 8.3  & 94.0\% & 7.4  & 9.9   & 12.0  & 89.7\% & 11.6 & 12.8  & 16.9  & 89.3\%  \\
                                              & asymp          & 4.4  & 6.5   & 6.0  & 89.0\% & 7.4  & 9.9   & 8.1   & 82.0\% & 11.6 & 12.8  & 6.3   & 74.7\%  \\
                                              & naiveboot      & 4.4  & 6.5   & 8.7  & 95.0\% & 7.4  & 9.9   & 12.9  & 91.8\% & 11.6 & 12.8  & 18.6  & 91.1\%  \\
                                              & double-rsp(5)  & 4.4  & 6.5   & 8.1  & 93.4\% & 7.4  & 9.9   & 13.0  & 90.5\% & 11.6 & 12.8  & 13.8  & 84.4\%  \\
                                              & double-rsp(10) & 4.4  & 6.5   & 8.9  & 94.1\% & 7.4  & 9.9   & 14.8  & 91.7\% & 11.6 & 12.8  & 16.1  & 86.0\%  \\
\hline
\end{tabular}}
Note: "Var" is the variance of point estimates of $\beta_{0}$ across simulated datasets;  "VE" is the average variance estimation for the point estimators over simulations, thus VE minus Var reflects the bias in estimated variance;  "CR" is the empirical coverage rate of $95\%$ confidence intervals. Five types of variance estimates for $\widehat{\beta}_{\psm}$ were compared: "software", output from the standard software; "asymp", the proposed asymptotic variance estimation; "naiveboot", the naive nonparametric bootstrap;
"double-rsp(5)", the proposed double-resampling method with five quantile strata and "double-rsp(10)", the proposed double-resampling method with ten quantile strata.
\end{table}

\begin{table}[hbt!]
\caption{\label{tab:beta0.5-N1000}Simulation results: bias ($\times10^{2}$)
and variance ($\times10^{3}$) of the point estimator of $\beta_{0}$,
coverage ($\%$) of $95\%$ confidence intervals based on $1,000$
Monte Carlo samples with true $\beta_{0}=-0.5$}
\resizebox{\textwidth}{!}{%
\begin{tabular}{llllllllllllll} 
\hline
\multicolumn{2}{l}{}                                           & Bias & Var   & VE   & CR     & Bias & Var   & VE    & CR     & Bias & Var    & VE    & CR      \\
\multicolumn{2}{l}{Level of covariate overlap}                          & \multicolumn{4}{l}{Strong}     & \multicolumn{4}{l}{Medium}    & \multicolumn{4}{l}{Weak}      \\ 
\hline
\multicolumn{14}{l}{Scenario (i) Correct specification of the propensity score model}                                                                           \\
$\widehat{\beta}_{\nai}$                       &                & 33.7 & 4.3   & 4.8  & 0.1\%  & 42.8 & 4.6   & 4.8   & 0.0\%  & 44.9 & 4.7    & 5.0   & 0.0\%   \\
$\widehat{\beta}_{\ipw}$                       &                & 0.8  & 6.1   & 7.3  & 97.1\% & 1.0  & 11.5  & 11.6  & 95.3\% & 2.7  & 17.8   & 15.0  & 92.9\%  \\
$\widehat{\beta}_{\aipw}$                        &                & 0.6  & 5.9   & 6.5  & 96.7\% & 0.6  & 10.7  & 11.5  & 94.5\% & 2.6  & 55.3   & 48.1  & 95.7\%  \\
$\widehat{\beta}_{\xm}$                        &                & 4.5  & 6.7   & 8.2  & 94.0\% & 7.3  & 8.9   & 10.3  & 89.4\% & 10.4 & 8.6    & 10.2  & 83.3\%  \\
$\widehat{\beta}_{\psm.0}$                     &                & 0.9  & 8.2   & 9.8  & 96.5\% & 1.7  & 12.8  & 14.9  & 96.5\% & 3.2  & 20.2   & 20.7  & 95.1\%  \\
\multirow{5}{*}{\begin{tabular}[c]{@{}l@{}}$\widehat{\beta}_{\psm}$\\ (M=1)\end{tabular}} & software       & 1.0  & 8.0   & 10.0 & 97.5\% & 1.5  & 13.1  & 15.1  & 96.3\% & 3.3  & 21.5   & 21.1  & 94.9\%  \\
                                              & asymp          & 1.0  & 8.0   & 8.4  & 95.0\% & 1.5  & 13.1  & 12.6  & 93.8\% & 3.3  & 21.5   & 19.2  & 91.9\%  \\
                                              & naiveboot      & 1.0  & 8.0   & 9.3  & 96.6\% & 1.5  & 13.1  & 14.0  & 96.1\% & 3.3  & 21.5   & 19.4  & 94.0\%  \\
                                              & double-rsp(5)  & 1.0  & 8.0   & 9.2  & 95.8\% & 1.5  & 13.1  & 15.0  & 95.5\% & 3.3  & 21.5   & 27.1  & 93.2\%  \\
                                              & double-rsp(10) & 1.0  & 8.0   & 9.8  & 96.9\% & 1.5  & 13.1  & 15.2  & 95.7\% & 3.3  & 21.5   & 25.0  & 92.4\%  \\
\multirow{5}{*}{\begin{tabular}[c]{@{}l@{}}$\widehat{\beta}_{\psm}$\\ (M=5)\end{tabular}} & software       & 1.5  & 6.1   & 7.5  & 96.9\% & 2.5  & 9.8   & 10.9  & 95.5\% & 3.4  & 12.0   & 14.1  & 95.4\%  \\
                                              & asymp          & 1.5  & 6.1   & 6.3  & 95.0\% & 2.5  & 9.8   & 8.8   & 91.6\% & 3.4  & 12.0   & 10.8  & 90.9\%  \\
                                              & naiveboot      & 1.5  & 6.1   & 7.8  & 97.4\% & 2.5  & 9.8   & 11.0  & 95.5\% & 3.4  & 12.0   & 14.2  & 95.3\%  \\
                                              & double-rsp(5)  & 1.5  & 6.1   & 7.0  & 95.7\% & 2.5  & 9.8   & 10.4  & 94.4\% & 3.4  & 12.0   & 16.3  & 95.2\%  \\
                                              & double-rsp(10) & 1.5  & 6.1   & 7.2  & 95.8\% & 2.5  & 9.8   & 10.6  & 95.0\% & 3.4  & 12.0   & 16.2  & 95.6\%  \\ 
\hline
\multicolumn{14}{l}{Scenario (ii) Misspecification of the propensity score model}                                                                               \\
$\widehat{\beta}_{\nai}$                       &                & 33.7 & 4.4   & 4.8  & 0.2\%  & 42.7 & 4.2   & 4.8   & 0.0\%  & 44.8 & 4.6    & 5.0   & 0.0\%   \\
$\widehat{\beta}_{\ipw}$                       &                & 3.7  & 15.0  & 9.9  & 89.1\% & 4.5  & 34.7  & 16.3  & 84.2\% & 4.6  & 55.1   & 22.9  & 82.7\%  \\
$\widehat{\beta}_{\aipw}$                        &                & 14.5 & 184.1 & 97.1 & 87.9\% & 31.8 & 811.4 & 371.4 & 89.6\% & 37.6 & 1223.2 & 666.2 & 95.2\%  \\
$\widehat{\beta}_{\xm}$                        &                & 4.3  & 7.0   & 8.2  & 93.9\% & 7.2  & 8.5   & 10.4  & 89.6\% & 10.3 & 8.5   & 10.2  & 84.1\%  \\
$\widehat{\beta}_{\psm.0}$                     &                & 0.6  & 8.4   & 9.9  & 96.5\% & 1.3  & 12.6  & 15.1  & 96.7\% & 2.4  & 20.4   & 20.7  & 95.4\%  \\
\multirow{5}{*}{\begin{tabular}[c]{@{}l@{}}$\widehat{\beta}_{\psm}$\\ (M=1)\end{tabular}}& software       & 4.2  & 8.2   & 9.6  & 93.7\% & 5.6  & 10.9  & 14.0  & 93.6\% & 6.9  & 17.4   & 18.7  & 92.2\%  \\
                                              & asymp          & 4.2  & 8.2   & 7.9  & 90.9\% & 5.6  & 10.9  & 10.8  & 91.3\% & 6.9  & 17.4   & 12.8  & 84.8\%  \\
                                              & naiveboot      & 4.2  & 8.2   & 9.0  & 92.7\% & 5.6  & 10.9  & 13.1  & 93.6\% & 6.9  & 17.4   & 17.4  & 92.0\%  \\
                                              & double-rsp(5)  & 4.2  & 8.2   & 9.8  & 93.7\% & 5.6  & 10.9  & 15.2  & 94.6\% & 6.9  & 17.4   & 17.7  & 91.1\%  \\
                                              & double-rsp(10) & 4.2  & 8.2   & 10.3 & 94.4\% & 5.6  & 10.9  & 18.0  & 95.2\% & 6.9  & 17.4   & 18.5  & 91.1\%  \\
\multirow{5}{*}{\begin{tabular}[c]{@{}l@{}}$\widehat{\beta}_{\psm}$\\ (M=5)\end{tabular}}& software       & 4.2  & 6.1   & 7.4  & 94.4\% & 6.5  & 8.3   & 10.2  & 92.1\% & 8.1  & 11.3   & 13.0  & 91.7\%  \\
                                              & asymp          & 4.2  & 6.1   & 5.9  & 91.0\% & 6.5  & 8.3   & 7.5   & 86.8\% & 8.1  & 11.3   & 5.0   & 82.8\%  \\
                                              & naiveboot      & 4.2  & 6.1   & 7.6  & 95.2\% & 6.5  & 8.3   & 10.4  & 92.6\% & 8.1  & 11.3   & 13.2  & 92.1\%  \\
                                              & double-rsp(5)  & 4.2  & 6.1   & 7.3  & 94.2\% & 6.5  & 8.3   & 11.2  & 93.2\% & 8.1  & 11.3   & 11.8  & 89.4\%  \\
                                              & double-rsp(10) & 4.2  & 6.1   & 7.7  & 94.9\% & 6.5  & 8.3   & 12.3  & 94.5\% & 8.1  & 11.3   & 12.9  & 90.1\%  \\
\hline
\end{tabular}}
Note: "Var" is the variance of point estimates of $\beta_{0}$ across simulated datasets;  "VE" is the average variance estimation for the point estimators over simulations, thus VE minus Var reflects the bias in estimated variance;  "CR" is the empirical coverage rate of $95\%$ confidence intervals. Five types of variance estimates for $\widehat{\beta}_{\psm}$ were compared: "software", output from the standard software; "asymp", the proposed asymptotic variance estimation; "naiveboot", the naive nonparametric bootstrap;
"double-rsp(5)", the proposed double-resampling method with five quantile strata and "double-rsp(10)", the proposed double-resampling method with ten quantile strata.
\end{table}

\begin{table}[hbt!]
\caption{\label{tab:ps=0.5}Simulation results for perfect overlap: bias ($\times10^{2}$)
and variance ($\times10^{3}$) of the point estimator of $\beta_{0}$,
coverage ($\%$) of $95\%$ confidence intervals based on $1,000$
Monte Carlo samples where the true propensity score equals 0.5 for each subject, i.e. perfect overlap}
\resizebox{\textwidth}{!}{%

\begin{tabular}{llllllllllllll} 
\hline
\multicolumn{2}{l}{}                                           & Bias & Var   & VE   & CR     & Bias & Var   & VE    & CR     & Bias & Var    & VE    & CR      \\
\multicolumn{2}{l}{}                          & \multicolumn{4}{l}{$\beta_0$=0}     & \multicolumn{4}{l}{$\beta_0$=0.5}    & \multicolumn{4}{l}{$\beta_0$=-0.5}      \\ 
\hline
\multicolumn{14}{l}{Scenario (i) Correct specification of the propensity score model}                                                                           \\
$\widehat{\beta}_{\nai}$                       &   &                            -0.1 & 4.6 & 4.8 & 95.2\% & 0.7 & 4.7 & 4.8 & 94.7\% & 0.1 & 4.5 & 4.6 & 95.7\% \\ 
$\widehat{\beta}_{\ipw}$                       &   &                            0.0 & 3.6 & 4.8 & 98.0\% & 0.8 & 3.9 & 4.8 & 96.9\% & 0.1 & 4.1 & 4.6 & 97.1\% \\ 
$\widehat{\beta}_{\aipw}$                      &   &                            0.0 & 3.3 & 3.5 & 95.7\% & 1.0 & 3.8 & 3.9 & 94.4\% & 0.1 & 3.9 & 4.2 & 96.2\% \\ 
$\widehat{\beta}_{\xm}$                        &   &                            0.0 & 4.5 & 6.3 & 98.1\% & -0.3 & 5.1 & 6.3 & 97.0\% & 0.9 & 5.4 & 6.0 & 95.5\%  \\ 
$\widehat{\beta}_{\psm.0}$                     &   &                            -0.1 & 6.0 & 6.0 & 95.6\% & 0.6 & 6.0 & 6.0 & 95.1\% & 0.1 & 5.6 & 5.7 & 94.7\% \\ 
\multirow{5}{*}{\begin{tabular}[c]{@{}l@{}}$\widehat{\beta}_{\psm}$\\ (M=1)\end{tabular}} & software       &                 0.3 & 5.5 & 6.6 & 96.5\% & 1.1 & 6.0 & 6.6 & 95.5\% & 0.4 & 6.2 & 6.3 & 95.4\% \\ 
                                                                                          & asymp          &                 0.3 & 5.5 & 5.3 & 94.3\% & 1.1 & 6.0 & 5.7 & 94.0\% & 0.4 & 6.2 & 5.9 & 94.1\% \\ 
                                                                                          & naiveboot      &                 0.3 & 5.5 & 5.8 & 95.6\% & 1.1 & 6.0 & 6.1 & 95.0\% & 0.4 & 6.2 & 6.1 & 95.2\% \\ 
                                                                                          & double-rsp(5)  &                 0.3 & 5.5 & 7.0 & 96.9\% & 1.1 & 6.0 & 6.9 & 96.2\% & 0.4 & 6.2 & 6.6 & 95.7\% \\ 
                                                                                          & double-rsp(10) &                 0.3 & 5.5 & 7.1 & 97.4\% & 1.1 & 6.0 & 7.0 & 96.6\% & 0.4 & 6.2 & 6.7 & 95.8\% \\ 
\multirow{5}{*}{\begin{tabular}[c]{@{}l@{}}$\widehat{\beta}_{\psm}$\\ (M=5)\end{tabular}} & software       &                 0.0 & 4.0 & 5.2 & 96.8\% & 0.8 & 4.4 & 5.2 & 96.1\% & 0.2 & 4.6 & 5.0 & 96.1\% \\ 
                                                                                          & asymp          &                 0.0 & 4.0 & 4.0 & 95.0\% & 0.8 & 4.4 & 4.3 & 94.3\% & 0.2 & 4.6 & 4.5 & 94.9\% \\ 
                                                                                          & naiveboot      &                 0.0 & 4.0 & 4.9 & 96.6\% & 0.8 & 4.4 & 5.1 & 96.1\% & 0.2 & 4.6 & 5.3 & 96.7\% \\ 
                                                                                          & double-rsp(5)  &                 0.0 & 4.0 & 5.4 & 97.4\% & 0.8 & 4.4 & 5.4 & 96.2\% & 0.2 & 4.6 & 5.2 & 96.1\% \\ 
                                                                                          & double-rsp(10) &                 0.0 & 4.0 & 5.5 & 97.4\% & 0.8 & 4.4 & 5.5 & 96.5\% & 0.2 & 4.6 & 5.2 & 96.3\% \\ 
\hline
\multicolumn{14}{l}{Scenario (ii) Misspecification of the propensity score model}                                                                               \\
$\widehat{\beta}_{\nai}$                       &       &                         -0.1 & 4.6 & 4.8 & 95.2\% & 0.7 & 4.7 & 4.8 & 94.7\% & 0.1 & 4.5 & 4.6 & 95.7\% \\ 
$\widehat{\beta}_{\ipw}$                       &       &                         -0.1 & 3.9 & 4.8 & 97.4\% & 0.7 & 4.2 & 4.8 & 96.8\% & 0.1 & 4.2 & 4.6 & 96.5\% \\ 
$\widehat{\beta}_{\aipw}$                        &     &                         0.0 & 3.7 & 3.9 & 95.9\% & 0.9 & 4.0 & 4.2 & 94.9\% & 0.1 & 4.0 & 4.3 & 96.1\% \\ 
$\widehat{\beta}_{\xm}$                        &       &                         0.0 & 4.8 & 6.5 & 97.9\% & -0.6 & 5.4 & 6.5 & 97.5\% & 1.3 & 5.8 & 6.2 & 95.8\% \\ 
$\widehat{\beta}_{\psm.0}$                     &       &                         -0.1 & 6.0 & 6.0 & 95.6\% & 0.6 & 6.0 & 6.0 & 95.1\% & 0.1 & 5.6 & 5.7 & 94.7\% \\
\multirow{5}{*}{\begin{tabular}[c]{@{}l@{}}$\widehat{\beta}_{\psm}$\\ (M=1)\end{tabular}}& software    &                 0.2 & 5.5 & 6.6 & 97.5\% & 0.9 & 5.9 & 6.6 & 96.7\% & 0.3 & 6.0 & 6.3 & 95.4\% \\  
                                                                                      & asymp          &                 0.2 & 5.5 & 5.5 & 95.8\% & 0.9 & 5.9 & 5.8 & 95.2\% & 0.3 & 6.0 & 6.0 & 94.7\% \\  
                                                                                      & naiveboot      &                 0.2 & 5.5 & 5.9 & 96.7\% & 0.9 & 5.9 & 6.1 & 96.1\% & 0.3 & 6.0 & 6.1 & 95.1\% \\  
                                                                                      & double-rsp(5)  &                 0.2 & 5.5 & 6.9 & 98.2\% & 0.9 & 5.9 & 6.9 & 96.8\% & 0.3 & 6.0 & 6.6 & 96.0\% \\  
                                                                                      & double-rsp(10) &                 0.2 & 5.5 & 7.0 & 98.0\% & 0.9 & 5.9 & 7.0 & 96.8\% & 0.3 & 6.0 & 6.7 & 96.2\% \\  
\multirow{5}{*}{\begin{tabular}[c]{@{}l@{}}$\widehat{\beta}_{\psm}$\\ (M=5)\end{tabular}}& software    &                 0.0 & 4.1 & 5.2 & 96.8\% & 0.7 & 4.5 & 5.2 & 96.4\% & 0.2 & 4.5 & 5.0 & 96.3\% \\  
                                                                                      & asymp          &                 0.0 & 4.1 & 4.3 & 95.3\% & 0.7 & 4.5 & 4.5 & 94.7\% & 0.2 & 4.5 & 4.6 & 95.9\% \\  
                                                                                      & naiveboot      &                 0.0 & 4.1 & 5.2 & 96.6\% & 0.7 & 4.5 & 5.4 & 96.8\% & 0.2 & 4.5 & 5.4 & 97.2\% \\  
                                                                                      & double-rsp(5)  &                 0.0 & 4.1 & 5.4 & 97.6\% & 0.7 & 4.5 & 5.4 & 96.5\% & 0.2 & 4.5 & 5.2 & 96.8\% \\  
                                                                                      & double-rsp(10) &                 0.0 & 4.1 & 5.5 & 97.5\% & 0.7 & 4.5 & 5.4 & 96.5\% & 0.2 & 4.5 & 5.2 & 96.8\% \\  
\hline
\end{tabular}}
Note: "Var" is the variance of point estimates of $\beta_{0}$ across simulated datasets;  "VE" is the average variance estimation for the point estimators over simulations, thus VE minus Var reflects the bias in estimated variance;  "CR" is the empirical coverage rate of $95\%$ confidence intervals. Five types of variance estimates for $\widehat{\beta}_{\psm}$ were compared: "software", output from the standard software; "asymp", the proposed asymptotic variance estimation; "naiveboot", the naive nonparametric bootstrap;
"double-rsp(5)", the proposed double-resampling method with five quantile strata and "double-rsp(10)", the proposed double-resampling method with ten quantile strata.
\end{table}

\end{document}